\def\BLO{ B ({\cal H}, {\mathbb C}) }
\def\CBALL{ {\mathbb B}_1[{\bf 0}]}
\def\DIM{ N }
\def\HAB{ {\cal H}_A \! \otimes \! {\cal H}_B }
\def\HER{ {\rm Her} ({\cal H}, {\mathbb C}) }
\def\PSD{ {\rm Her}^{\ge 0} ({\cal H}, {\mathbb C}) }
\def\RANK{ m }
\def\RHO{ {\rm Her}^{\ge 0}_{1} ({\cal H}, {\mathbb C}) }
\newcommand{\BRA}[1] { \left\langle #1 \right|}
\newcommand{\CMATRIX}[1] {{\mathbb C}^{ #1 \! \times \! #1}}
\newcommand{\KET}[1] {\left| #1 \right\rangle}
\newtheorem{definition}{Definition}
\newtheorem{lem}{Lemma}
\newtheorem{prop}{Proposition}
\def\EFFECTIVE{AlickiJSP83,MankoMPLA92,190803699,ErdosJSP09,AlickiJSP83,190810145,190709349,Breuer2002}
\def\EXPANSIVE{MielnikJMP80,PhysRevLett.81.3992,BechmannPLA98,0502072}
\def\FRICTION{190810145,190709349,Breuer2002,GisinJPA81,GrabertZPB892,GisinJPA92,KowalskAP19,KowalskiQIP20,210308982,14057165}
\def\NORMALIZEDPTP{KrausAnnPhys71,KowalskAP19,200309170,KowalskiQIP20}
\def\NQI{9802051,9803019,9811036,0309189,BrunPRL09,BennettPRL09,13033537,13030371,13107301,150706334,200907800,220613362,XuPRR22}
\def\NQMEXPERIMENT{BollingerPRL89,PhysRevLett.64.2261,WalsworthPRL90,MajumderPRL90}
\def\NQMTHEORY{KibbleCMP78,WeinbergPRL89,WeinbergAP89,GisinPLA90,PolchinskiPRL91,CzachoPRA96,9501008,9708029,9810023,MielnikPLA01,CzachorPLA02,KentPRA05,JordanPRA06,0702171,JordanPRA10,14017018,BassiEJP15,160203772,190603869,200906422,210610576,GisinHPA89,GisinJPA95}
\def\OPEN{190810145,190709349,Breuer2002,GisinJPA81,GrabertZPB892,GisinJPA92,210308982,14057165,FonsecaRomeroPRA04,LeviPRA20,211010430}
\def\POSTSELECTION{KissPRA06,KissPRL11,GuanPRA13,GilyenSR15,KalmanPRA18,220514506}
\def\STOCHASTIC{GisinHPA89,GisinJPA92,GisinJPA95}
\def\WEAK{9905064,0008108,190712606,210708816}
\begin{document}
\rightline{\href{https://doi.org/10.1002/qute.202200156}{\tt Advanced Quantum Technologies 2200156 (2023)}}
\vspace{0.2in}
\title{Fast quantum state discrimination with nonlinear PTP channels}
\author{\textsc{Michael R. Geller}}
\affiliation{Center for Simulational Physics, University of Georgia, Athens, Georgia 30602, USA}
\date{July 4, 2023}

\begin{abstract}
\vskip 1.0in
\centerline{\bf Abstract}
\vskip 0.1in
\begin{spacing}{0.9}
We investigate models of nonlinear quantum computation based on deterministic positive trace-preserving (PTP) channels and  evolution equations. The models  are defined in any finite Hilbert space, but the main results are for dimension $\DIM \! = \! 2$. For every normalizable linear or nonlinear positive map $\phi$ on bounded linear operators $X$, there is an associated normalized PTP channel $ \phi(X) / {\rm tr}[\phi(X)]$. Normalized PTP channels include unitary mean field theories, such as the Gross-Pitaevskii equation for interacting bosons, as well as models of linear and nonlinear dissipation. They classify into 4 types, yielding 3 distinct forms of nonlinearity whose computational power we explore. In the qubit case these channels support Bloch ball torsion and other distortions studied previously, where it has been shown that such nonlinearity can be used to increase the separation between a pair of close qubit states, suggesting an exponential speedup for state discrimination. Building on this idea, we argue that this operation can be made robust to noise by using  dissipation to induce a bifurcation to a novel phase where a pair of attracting fixed points create an intrinsically fault-tolerant nonlinear state discriminator. 
\end{spacing}
\end{abstract}
\maketitle

Quantum nonlinearity, beyond the stochastic nonlinearity provided by projective measurement, might be a powerful computational resource 
\cite{\EXPANSIVE,\NQI}. However there is no experimental evidence for physics beyond standard linear quantum mechanics \cite{\NQMEXPERIMENT}. It is well known to be challenging to even formulate a consistent, fundamentally nonlinear quantum theory in accordance with general principles \cite{\NQMTHEORY}. In this paper we consider the application of {\it effective} quantum nonlinearity to information processing, while at the same time accepting that quantum physics is fundamentally linear. Effective means that it arises in some approximate (e.g., low-energy) quantum description, is a consequence of constraints on a linear system, or emerges in the limit of a large number of particles \cite{\EFFECTIVE}. An early example from particle physics is the low-energy reduction of electroweak theory to Fermi's simpler 1933 model $\mathscr{L} =  {\overline \psi} i {\slashed \partial} \psi - g \, ( {\overline \psi} \gamma_a \psi )^2 \! ,$ containing a four-fermion interaction. Effective nonlinearity is also common in condensed matter (Breuer and Petruccione \cite{Breuer2002} give an excellent introduction). It's a natural byproduct of dimensional reduction, where the dynamics of a complex quantum many-body system is described by a model with fewer degrees of freedom. A well known case is self-consistent mean field theory: For $n$ quantum particles moving in $D$ dimensions, such an effective model reduces a problem with $nD$ degrees of freedom to one involving only $D$, at the expense of nonlinearity and errors (usually). Examples include mean field models for superfluids, superconductors, and laser fields. Beyond mean field theory, various forms of nonlinearity have been proposed to describe friction and dissipation in quantum mechanics \cite{\FRICTION}, and for open systems more generally \cite{\NORMALIZEDPTP,\OPEN}.

It is not known whether effective nonlinearity can actually be used to enhance quantum information processing. Perhaps any {\it nonlinear} advantage is an artifact of approximations and could never be realized \cite{13030371,13107301,150706334}. Childs and Young \cite{150706334} used the speedup predicted for optimal qubit state discrimination with Gross-Pitaevskii nonlinearity to derive a complexity theoretic bound on the long-time accuracy of the Gross-Pitaevskii equation itself (though weaker than a known bound \cite{150706334}). Can a different nonlinearity be used, or can the Gross-Pitaevskii nonlinearity be used in another way? Is speedup the only possible advantage; what about noise resilience? The purpose of this paper is to explore these questions by providing a framework for studying known nonlinear channels from an information processing perspective, and for proposing new ones that might be experimentally realizable in the near future. The framework should be applicable to strongly correlated quantum materials, such as non-Hermitian exciton-polariton condensates \cite{GaoNat15,210701675}, that are open and operated under extreme conditions. We aim to impose as little structure as possible on the allowed evolution beyond preserving the Hermitian symmetry, positive semi-definiteness, and trace of the density matrix $X$. Stochastic nonlinearity \cite{\STOCHASTIC}, including projective measurement with postselection \cite{\POSTSELECTION} and weak continuous measurement \cite{\WEAK}, is known to be a useful resource, but is not considered here.\footnote{As discussed below, postselected measurement is formally in the NINO class, but is stochastic.}
Therefore we restrict ourselves to deterministic nonlinear positive trace-preserving (PTP) channels \cite{SudarshanPR61}. We do not attempt to derive an effective nonlinear theory from an underlying physical model, but instead try to identify what types of nonlinearity are desirable, and why. However we will only scratch the surface of this formidable problem.

Interestingly, according to work by Abrams and Lloyd \cite{PhysRevLett.81.3992}, Aaronson \cite{0502072}, and Childs and Young \cite{150706334}, nonlinearity is not required in large quantities: Even one ``nonlinear qubit'' would provide a computational benefit when coupled to a linear quantum computer. This is because nonlinearity can be used to increase the trace distance between a pair of qubit states \cite{\EXPANSIVE,150706334}, implying an exponential speedup for unstructured search and hence for any problem in the class NP.  We mainly focus on this $\DIM \! = \! 2$ case. While these results are intriguing, it should be emphasized that they apply in an idealized setting, where {\it model} errors are neglected. 

Building on a growing body of similarly motivated work \cite{\NORMALIZEDPTP,\OPEN}, we study a family of normalized PTP channels of the form $ \phi(X) / {\rm tr}[\phi(X)]$, where $\phi(X)$ is a positive linear or nonlinear map on bounded linear operators $X$ satisfying ${\rm tr}[\phi(X)] \! \neq \! 0$. Normalized PTP channels fall into 4 classes, yielding 3 distinct forms of nonlinearity. As in the classical setting, rich dynamical structures result from the interplay of nonlinearity and dissipation, and we will see that PTP channels allow for greater control over engineered {\it linear} dissipation than completely positive channels do. Our main result is the identification of a nonlinear channel where the Bloch ball separates into two basins of attraction, which can be used to implement fast intrinsically fault-tolerant state discrimination. Although we do not address the model error issue directly, we hope that the predicted {\it phase} will survive in realistic models and be observable experimentally. Section \ref{ptp channels section} mainly covers the definition of a PTP channel and can be skipped by many readers. Normalized PTP channels are classified in Sec.~\ref{ptp models section}, and fault-tolerant nonlinear state discrimination is explained in Sec.~\ref{state discrimination section}. 
The main results are summarized in  Sec.~\ref{conclusion section}.

\section{PTP channels}
\label{ptp channels section}

\subsection{Notation}

Let ${\cal H} = ( {\rm span}\{ | e_i \rangle \}_{i=1}^\DIM ,  \langle x|y\rangle )$ be the system Hilbert space with inner product $ \langle x | y \rangle = \sum_{i=1}^\DIM x_i^* y_i$, complete orthonormal basis $ \{ | e_i \rangle \}_{i=1}^\DIM  \ (\langle e_i | e_j \rangle \! = \! \delta_{ij}, \ \sum_{i=1}^\DIM e_{ii}  \! = \!  I_\DIM,  \ e_{ij} \! := \!  | e_i \rangle \langle e_j |),$ and norm $ \| |x\rangle \| =  \sqrt{\langle x | x \rangle}$. Here $x^*$ denotes complex conjugation and $I_\DIM$ is the $\DIM \! \times \! \DIM$ identity. Let $X : {\cal H} \rightarrow {\cal H} $ be a linear operator on ${\cal H}$ (isomorphic to a matrix $X \in \CMATRIX{\DIM}$) and let $X^\dagger$ be its adjoint. Also let $|X| = \sqrt{X^\dagger X}$. The set of these bounded linear operators form a second complex vector space $\BLO$; this space is our main focus. Let $\HER = \{ X \in \BLO : X = X^\dagger \} $ be the subset of self-adjoint observables, and $\PSD = \{ X \in {\rm Her}({\cal H})   : X \succeq 0 \}$  be the positive semidefinite (PSD) subset. Quantum states live in the subset of $\PSD$ with unit trace: $ \RHO = \{ X \in \PSD  : {\rm tr}(X) = 1 \}.$ In the qubit case the elements $X = (I_2 + {\bf r} \cdot \bm{\sigma})/2$ of $\RHO$ are mapped, using the basis of Pauli matrices $(\sigma^a)_{a =1,2,3}$, to real vectors ${\bf r} = (x,y,z) \in {\mathbb R}^3$ with $|{\bf r}| \le 1$, the closed Bloch ball $\CBALL$. 

\subsection{Nonlinear positive maps}

\begin{definition}[{\bf Linear map}]
Let $L: \BLO \rightarrow  \BLO $ be a map on bounded linear operators satisfying {\rm (i)} $ L(X+Y) = L(X) + L(Y)$, and  {\rm (ii)} $L(\alpha X) = \alpha L(X) $, for every $X, Y \in \BLO$ and $\alpha \in {\mathbb C}$.
Then $L$ is a {\bf linear} map on $\BLO$. 
\label{def linear}
\end{definition}

\begin{lem}[]
Let $L : \BLO \rightarrow  \BLO $ be a linear map on finite-dimensional bounded linear operators. Then $L$ has a representation
\begin{eqnarray}
X \mapsto L(X) =  \sum_{\alpha =1}^{\RANK} 
A_\alpha  X  B_\alpha, \ \ 
 A_\alpha, B_\alpha \in \CMATRIX{\DIM}, \ \ \RANK \! \le  \! \DIM^2,
 \ \ \DIM = \dim({\cal H}).
 \label{general operator-sum}
\end{eqnarray}
\label{operator-sum lemma}
\end{lem}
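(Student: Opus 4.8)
The plan is to reduce the claim to a statement about the finite-dimensional vector space of linear maps on $\BLO$, which has dimension $\DIM^4$; existence of \emph{some} operator-sum form is then cheap, and the real work is sharpening the term count to $\RANK \le \DIM^2$ by a tensor-rank (operator-Schmidt) argument. As a first step I would record a wasteful representation using the $\DIM^4$ elementary maps $\Phi_{ijkl} : X \mapsto e_{ij}\,X\,e_{kl}$. Since $\Phi_{ijkl}(e_{pq}) = e_{ij}\,e_{pq}\,e_{kl} = \delta_{jp}\delta_{qk}\,e_{il}$, each $\Phi_{ijkl}$ sends the single basis element $e_{jk}$ to $e_{il}$ and annihilates all other matrix units, so the $\Phi_{ijkl}$ are linearly independent and hence a basis of the map space. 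Expanding $L(e_{jk}) = \sum_{il} c^{il}_{jk}\,e_{il}$ gives $L(X) = \sum_{ijkl} c^{il}_{jk}\,e_{ij}\,X\,e_{kl}$, which already has the form \eqref{general operator-sum} but with up to $\DIM^4$ terms.

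Next I would vectorize to make the compression transparent. With column stacking one has ${\rm vec}(A X B) = (B^{\rm T} \! \otimes A)\,{\rm vec}(X)$, so $L$ is encoded by the unique matrix $M \in \CMATRIX{\DIM^2}$ obeying ${\rm vec}(L(X)) = M\,{\rm vec}(X)$, and a single term $A_\alpha X B_\alpha$ corresponds to the Kronecker product $B_\alpha^{\rm T} \! \otimes A_\alpha$. The lemma is thus equivalent to showing that every $M \in \CMATRIX{\DIM^2}$ is a sum of at most $\DIM^2$ Kronecker products of $\DIM \! \times \! \DIM$ matrices.

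Then I would invoke the operator-Schmidt decomposition. Under the canonical identification $\CMATRIX{\DIM^2} \cong \CMATRIX{\DIM} \otimes \CMATRIX{\DIM}$ the Kronecker products are exactly the simple tensors, and reshuffling the $\DIM^4$ entries of $M$ into a matrix $R(M) \in \CMATRIX{\DIM^2}$ turns a singular value decomposition of $R(M)$ into $M = \sum_{\alpha=1}^{\RANK} s_\alpha\,V_\alpha \! \otimes U_\alpha$ with $\RANK = {\rm rank}\,R(M)$. Setting $A_\alpha = \sqrt{s_\alpha}\,U_\alpha$ and $B_\alpha = \sqrt{s_\alpha}\,V_\alpha^{\rm T}$ reproduces \eqref{general operator-sum}; since $R(M)$ is an $\DIM^2 \! \times \! \DIM^2$ matrix its rank satisfies $\RANK \le \DIM^2$, which is the asserted bound.

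The main obstacle is exactly this last inequality: existence of a representation is automatic and yields only the useless count $\DIM^4$, whereas $\RANK \le \DIM^2$ is the true content of the lemma. The step deserving care is confirming that reshuffling is a bijection of the $\DIM^4$ matrix entries whose ordinary matrix rank equals the operator-Schmidt rank, so that the $\DIM^2 \! \times \! \DIM^2$ SVD genuinely caps the number of terms; finite-dimensionality guarantees the decomposition terminates, and the transposition leaves each $B_\alpha$ inside $\CMATRIX{\DIM}$.
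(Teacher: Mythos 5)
Your proof is correct, but it follows a genuinely different route from the paper's. The paper gets $\RANK \le \DIM^2$ by a direct, elementary construction: it fixes the left factors to be the matrix units themselves, $A_{ab} = e_{ab}$ (so there are exactly $\DIM^2$ of them), and then absorbs all of the data of $L$ into the right factors via $(B_{ab})_{cd} = L(e_{bc})_{ad}$, checking equality entrywise on $L(X)_{ad}$. In effect the paper performs the same compression of the wasteful $\DIM^4$-term sum $\sum_{ijkl} c^{il}_{jk}\, e_{ij} X e_{kl}$ that you describe, but does it by grouping the four indices into $\DIM^2$ pairs rather than by a rank argument. Your vectorization-plus-reshuffling/SVD route costs more machinery but buys more: it yields the \emph{minimal} term count (the operator-Schmidt rank of $L$, which can be far below $\DIM^2$), and it produces factors that are orthogonal in the Hilbert--Schmidt inner product --- essentially the normalization ${\rm tr}(A_\alpha^\dagger A_\beta) = \delta_{\alpha\beta}$ that the paper only obtains later, in Proposition \ref{operator-sum representation prop}, via the Choi matrix (which is itself a reshuffling of $M$ in disguise). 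The one step you rightly flag --- that ordinary matrix rank of the reshuffled $R(M)$ equals the number of simple tensors needed --- is standard and holds because un-reshuffling carries each rank-one term $s_\alpha u_\alpha v_\alpha^\dagger$ of the SVD to a single Kronecker product; so there is no gap.
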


\begin{proof}
Every linear map is specified by its action on a complete matrix basis $e_{ab} = |e_a\rangle \langle e_b|  \in \CMATRIX{\DIM }, 
\ (e_{ab})_{a'b'} =  \delta_{aa'} \delta_{bb'}$:
\begin{eqnarray}
X \mapsto L(X) 
=  L \bigg(\sum_{a,b=1}^{\DIM} X_{ab} \, e_{ab} \bigg)
= \sum_{a,b=1}^{\DIM} X_{ab} \, L(e_{ab}).
\label{general linear map in basis}
\end{eqnarray} 
The set $\{ L(e_{ab})\}_{a,b=1}^\DIM$ defines the map. Let $\alpha =(a,b)$ be a composite index with $ a, b \in \{ 1, \dots, \DIM\}$, and rewrite (\ref{general operator-sum}) with $\RANK = \DIM^2$ as
$ L(X) = \sum_{a,b=1}^{\DIM}  A_{ab} X B_{ab} $ with 
$A_{ab}, B_{ab} \in \CMATRIX{\DIM}$. $A_{ab}$ and $B_{ab}$ can always be chosen to implement the map (\ref{general linear map in basis}): the choice $ A_{ab}= e_{ab}$ leads to $ \langle e_a | L(X) | e_d \rangle  = L(X)_{ad}  = \sum_{b,c} X_{bc} \,  (B_{ab})_{cd}$. The same matrix element of (\ref{general linear map in basis}) is $\sum_{b,c} X_{bc} \,  L(e_{bc})_{ad}$, so the choice $(B_{ab})_{cd} =  L(e_{bc})_{ad} 
$ reduces this to (\ref{general linear map in basis}) as required. $\Box$
\end{proof}

\begin{definition}[{\bf Hermitian map}]
Let  $ T : \BLO \rightarrow  \BLO $ be a map on bounded linear operators satisfying $T(X)^\dagger = T(X^\dagger) $ for every $X \in \BLO$. Then $T$ is a {\bf Hermitian} map on $\BLO$. 
\end{definition}

\noindent Hermitian maps have the property $T(\HER) \subseteq \HER$,  and they can be nonlinear.

\begin{definition}[{\bf Positive map}]
Let $\phi : \BLO \rightarrow  \BLO $ be a Hermitian map on bounded linear operators satisfying 
$\phi(  X \succeq 0) \succeq 0 $ for all $X \in \BLO$, where $ \cdot \succeq 0$ means it's an element of the PSD subset $\PSD$. Then $\phi$ is a {\bf positive} map on $\BLO$. 
\end{definition}
\noindent  We reserve the symbol $\phi$ for positive maps. 

\begin{prop}[{\bf Linear positive map} \cite{KrausAnnPhys71,ChoiLin.Alg.App.75}]
Let $\phi : \BLO \rightarrow  \BLO $ be a positive linear  map on finite-dimensional bounded linear operators. Then $\phi$ has a representation\footnote{The existence of the representation (\ref{operator-sum representation}) does not actually require $\phi$ to be positive, only Hermitian.}
\begin{eqnarray}
X \mapsto \phi(X) =  \sum_{\alpha =1}^{\RANK} \lambda_\alpha A_\alpha  X  A_\alpha ^\dagger, \ \ 
 A_\alpha \in \CMATRIX{\DIM}, \ \ 
 {\rm tr}(A^\dagger_{\alpha} A_{\beta}) = \delta_{\alpha \beta}, \ \
 \lambda_\alpha \in {\mathbb{R}},
  \ \ \RANK \! \le \! \DIM^2.
 \label{operator-sum representation}
\end{eqnarray}
 \label{operator-sum representation prop}
\end{prop}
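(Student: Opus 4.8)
The plan is to prove this through the Choi matrix (the Choi--Jam\-io\l kowski isomorphism), which converts the structural question about $\phi$ into the spectral theory of a single Hermitian matrix. As the footnote signals, positivity of $\phi$ is never used; the only inputs are linearity and the Hermitian property $\phi(X)^\dagger \! = \! \phi(X^\dagger)$, so I would carry the argument through in that reduced form and note at the end where positivity would (and would not) have changed the conclusion.

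First I would introduce the unnormalized maximally entangled vector $|\Omega\rangle = \sum_{i=1}^\DIM |e_i\rangle \! \otimes \! |e_i\rangle \in {\cal H} \otimes {\cal H}$ and define the Choi matrix $C_\phi = (\phi \otimes {\rm id})(|\Omega\rangle\langle\Omega|) = \sum_{i,j=1}^\DIM \phi(e_{ij}) \otimes e_{ij}$, an operator on the $\DIM^2$-dimensional space ${\cal H} \otimes {\cal H}$. Using $e_{ij}^\dagger = e_{ji}$ together with $\phi(e_{ij})^\dagger = \phi(e_{ij}^\dagger) = \phi(e_{ji})$, a one-line relabeling $i \leftrightarrow j$ gives $C_\phi^\dagger = C_\phi$. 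This is the crucial (and only) place the Hermitian hypothesis enters.

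Next I would apply the spectral theorem to the Hermitian $\DIM^2 \! \times \! \DIM^2$ matrix $C_\phi$, writing $C_\phi = \sum_{\alpha=1}^\RANK \lambda_\alpha |v_\alpha\rangle\langle v_\alpha|$ with real eigenvalues $\lambda_\alpha$ and orthonormal eigenvectors $|v_\alpha\rangle$, keeping only the $\RANK \le \DIM^2$ nonzero eigenvalues (the bound $\RANK \le \DIM^2$ being simply $\dim({\cal H} \otimes {\cal H})$). Each eigenvector is then ``unvectorized'' into a matrix $A_\alpha \in \CMATRIX{\DIM}$ via $|v_\alpha\rangle = (A_\alpha \otimes I_\DIM)|\Omega\rangle = \sum_{i,j}(A_\alpha)_{ij}\,|e_i\rangle \! \otimes \! |e_j\rangle$. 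Under this isomorphism the orthonormality $\langle v_\alpha | v_\beta \rangle = \delta_{\alpha\beta}$ becomes precisely ${\rm tr}(A_\alpha^\dagger A_\beta) = \delta_{\alpha\beta}$, since $\sum_{ij}(A_\alpha)_{ij}^*(A_\beta)_{ij} = {\rm tr}(A_\alpha^\dagger A_\beta)$. Finally I would reconstruct the map: contracting the second tensor factor against basis vectors recovers the action on matrix units, $\phi(e_{ij}) = ({\rm id} \otimes \langle e_i|)\,C_\phi\,({\rm id} \otimes |e_j\rangle)$, because $\langle e_i| e_{kl} |e_j\rangle = \delta_{ik}\delta_{lj}$; substituting the spectral decomposition and using $({\rm id}\otimes\langle e_i|)|v_\alpha\rangle = A_\alpha|e_i\rangle$ with its adjoint yields $\phi(e_{ij}) = \sum_\alpha \lambda_\alpha A_\alpha\, e_{ij}\, A_\alpha^\dagger$. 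Linearity, via Eq.~(\ref{general linear map in basis}), then extends this to $\phi(X) = \sum_\alpha \lambda_\alpha A_\alpha X A_\alpha^\dagger$ for all $X$, which is the claimed representation (\ref{operator-sum representation}).

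The individual steps are routine; the part to get right is the index bookkeeping of the vectorization $|v_\alpha\rangle \mapsto A_\alpha$ — in particular verifying that orthonormality of the eigenvectors maps exactly to trace-orthogonality ${\rm tr}(A_\alpha^\dagger A_\beta) = \delta_{\alpha\beta}$ and that the contraction formula genuinely inverts the Choi construction. I would also flag what is \emph{not} claimed: a merely positive (non--completely-positive) $\phi$ has a Hermitian but generally indefinite Choi matrix, so some $\lambda_\alpha$ can be negative. This is exactly why the weights in (\ref{operator-sum representation}) are real rather than nonnegative, and is the precise point distinguishing this representation from the Kraus form of a completely positive channel, in which $C_\phi \succeq 0$ forces all $\lambda_\alpha \ge 0$.
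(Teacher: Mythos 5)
Your proposal is correct and follows essentially the same route as the paper's own proof: construct the Choi matrix $C_\phi=\sum_{i,j}\phi(e_{ij})\otimes e_{ij}$, use the Hermitian property $\phi(X)^\dagger=\phi(X^\dagger)$ to show $C_\phi^\dagger=C_\phi$, spectrally decompose it, unvectorize the orthonormal eigenvectors into the $A_\alpha$ (whence ${\rm tr}(A_\alpha^\dagger A_\beta)=\delta_{\alpha\beta}$), and extend by linearity. The only differences are cosmetic (unnormalized $|\Omega\rangle$ versus the paper's normalized $|\Psi\rangle$ with an explicit factor $N$, and recovering $\phi(e_{ij})$ by contraction rather than by matching tensor components), and your closing remark about indefinite Choi spectra is exactly the point the paper makes when contrasting P with CP maps.
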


\begin{proof}
Define a Choi operator 
$C := \sum_{i,j =1}^\DIM \phi(e_{ij}) \otimes e_{ij}$ in an expanded Hilbert space $\HAB$ consisting of two copies of our system ${\cal H}$, with ${\dim(\cal H}_{A,B})  \! = \! \DIM .$ The matrix elements of $C$ in the product basis $ \{ |e_{a} \rangle \otimes |e_{b} \rangle \}_{a,b=1}^\DIM $
are $(\langle e_{a} | \otimes \langle e_{b} |) \, C \, 
(|e_{a'} \rangle \otimes |e_{b'} \rangle ) 
= \langle e_a | \phi(e_{bb'}) | e_{a'} \rangle
= \phi(e_{bb'})_{aa'}$. 
Using the Hermitian property $\phi (X)^\dagger = \phi (X^\dagger)$ we see that $C \in \CMATRIX{\DIM^2}$ is Hermitian:
$ (\langle e_{a} | \otimes \langle e_{b} |) \, C^\dagger \, 
(|e_{a'} \rangle \otimes |e_{b'} \rangle ) 
= \big( \langle e_{a'} | \otimes \langle e_{b'} | \, C \, 
| e_{a} \rangle \otimes |e_{b} \rangle \big)^*
=  \phi(e_{b'b})_{a'a}^* =  \phi(e_{bb'})_{aa'}
= (\langle e_{a} | \otimes \langle e_{b} |) \, C \, 
(|e_{a'} \rangle \otimes |e_{b'} \rangle )
$. It therefore has a spectral decomposition
$C = \sum_{\alpha=1}^{\DIM^2} \lambda_\alpha \, | V_\alpha \rangle \langle V_\alpha |, \ \lambda_\alpha \in {\mathbb R},$ where the eigenvectors $ \{ | V_\alpha \rangle \}_{\alpha=1}^{\DIM^2}$ form a complete orthonormal basis in $\HAB$.  Any $| \psi \rangle \in \HAB $ can be obtained by the application of an operator $\DIM^{\frac{1}{2}} \, A \otimes I_\DIM $ to $| \Psi \rangle := \DIM^{-\frac{1}{2}}  \sum_{i=1}^\DIM |e_i\rangle \otimes |e_i \rangle $,
with $A_{ij} = \langle e_i | A | e_{j} \rangle =  \,(\langle e_{i} | \otimes \langle e_{j} | )\,  | \psi \rangle $. Then
$| V_\alpha \rangle \langle V_\alpha | = N \, ( A_\alpha \otimes I) | \Psi \rangle \langle  \Psi  | 
( A^\dagger_\alpha \otimes I)$ where
$\langle e_i | A_\alpha | e_{j} \rangle = (\langle e_{i} | \otimes \langle e_{j} | )\,  |  V_\alpha \rangle $. Also note that $| \Psi \rangle \langle  \Psi  |  = \DIM^{-1}  \sum_{i,j=1}^\DIM e_{ij} \otimes e_{ij}$, so
\begin{eqnarray}
C &=& N \sum_{\alpha=1}^{\DIM^2} \lambda_\alpha \,
( A_\alpha \otimes I) | \Psi \rangle \langle  \Psi  | 
( A^\dagger_\alpha \otimes I)
 = \sum_{i,j=1}^\DIM  \bigg(  \sum_{\alpha=1}^{\DIM^2} \lambda_\alpha   A_\alpha e_{ij} A_\alpha^\dagger \bigg) \otimes e_{ij} .
 \end{eqnarray}
Then $\phi(e_{ij}) =  \sum_{\alpha=1}^{\DIM^2} \lambda_\alpha   A_\alpha e_{ij} A_\alpha^\dagger$, and by linearity and completeness, 
$\phi(X) =  \sum_{\alpha=1}^{\DIM^2} \lambda_\alpha   A_\alpha X A_\alpha^\dagger$, as required. Furthermore, because the $|V_\alpha \rangle$ are orthonormal, $\langle V_\alpha |  V_\beta \rangle = 
{\rm tr}(A^\dagger_{\alpha} A_{\beta})  =  \delta_{\alpha \beta}. \ \Box$
\end{proof}

Having established a general representation for positive linear maps, we give some nonlinear examples. The first is
$ X \mapsto \phi (X) = 
(A^\dagger + CXB^\dagger) (A + BXC^\dagger) = [\phi (X^\dagger)]^\dagger , 
\ \ A, B, C \in \CMATRIX{\DIM}. $
On any PSD input, $\phi (X)  = | A + BXC^\dagger |^2 \succeq 0$, so $\phi$ is positive. Two examples of discrete nonlinear positive maps are $ X \mapsto  \phi_\pm(X) =  ( {\rm tr}(X) I_\DIM \pm X )^2 =  [\phi_\pm (X^\dagger)]^\dagger, $ which reduce to $| {\rm tr}(X) I_\DIM \pm X |^2 \succeq 0$ on PSD inputs. Another example is $ X \mapsto \phi (X) = |{\rm det}(X)|  \cdot I_\DIM = [\phi (X^\dagger)]^\dagger, $ which (after normalization) maps every input to the  infinite-temperature thermal state. Permanents and determinants have also been used to construct nonlinear completely positive trace-preserving (CPTP) channels \cite{AndoChoi1986}.  

\begin{definition}[{\bf PTP channel} \cite{SudarshanPR61}]
Let $\Lambda :  \RHO \rightarrow  \RHO $  be a positive map satisfying $ {\rm tr}(\Lambda (X)) = 1$ for all $X \in \RHO $. Then $\Lambda$ is a {\bf positive trace-preserving} {\rm (PTP)}  channel.
\label{ptp channel def}
\end{definition}
\noindent 
In this paper $\Lambda$ always refers to a PTP channel.
The PTP channels defined here are endomorphisms on the state space $\RHO$, and are only required to act properly on physical inputs. They may be composed of maps defined on operators outside of $\RHO$ as well. In this case, the behavior of the extended PTP channels on inputs outside of $\RHO$ is not constrained by {Definition \ref{ptp channel def}}.\footnote{In particular, they might not be trace preserving in the usual sense of ${\rm tr}(\Lambda (X)) \! = \!  {\rm tr}(X)$, because the trace preservation in  {Definition \ref{ptp channel def}} is really a trace {\it fixing} condition.} 

\section{PTP Models}
\label{ptp models section}

In this paper we investigate models of nonlinear quantum computation based on a category of PTP channels that take the form of a nonlinear positive channel rescaled to conserve trace \cite{\NORMALIZEDPTP}. Although normalized PTP channels do not provide an exhaustive classification of nonlinear channels, they are sufficient to illustrate some of the different types of effective nonlinearity that are available or might become available in the near future. 

\begin{definition}[{\bf Normalized PTP channel} \cite{\NORMALIZEDPTP}]
Let $\phi : \RHO \rightarrow \RHO $ be a positive map satisfying ${\rm tr}[\phi(X)] \neq 0$ for all $X \in \RHO $. Then the PTP map 
\begin{eqnarray}
\Lambda_\phi  : \RHO \rightarrow \RHO
\ \ {\rm given \ by} \ \ 
X \mapsto \Lambda_\phi (X)= \frac{\phi(X)}{{\rm tr}[\phi(X)]}
\label{normalized map def}
\end{eqnarray}
is a {\bf normalized PTP channel}.
\label{scaled PTP}
\end{definition}
\noindent 
Normalized PTP channels are common because they can be constructed from any positive map $\phi$ satisfying a normalizability condition ${\rm tr}[\phi(X)] \neq 0$.\footnote{To be precise, normalized PTP channels are only defined on a subset of physical states  $\RHO$, because their action on inputs with ${\rm tr}[\phi(X)] = 0$ is not specified. However this situation does not arise in the cases considered here.}
From positivity alone, ${\rm tr}[\phi(X)] \ge 0$.
Normalizability requires the more restrictive condition that ${\rm tr}[\phi(X)]  > 0$. The condition ${\rm tr}[\phi(X)]  > 0$ is important in (\ref{normalized map def}) because it ensures the positivity of $\Lambda_\phi $. Normalized PTP channels naturally fall into four classes, according to whether $\phi$ is linear (or not) and trace-preserving (or not):
\begin{enumerate}

\item[A.] Linear positive $\phi$ and ${\rm tr}[\phi(X)]=1 \ {\rm for \  all} \ X \in \RHO$. This is the class of general linear positive channels, which includes linear CPTP channels.

\item[B.]  Linear positive $\phi$ and ${\rm tr}[\phi(X)] \neq 1 \ {\rm for \ some} \ X \in \RHO$. These can be called nonlinear in normalization only (NINO) channels.

\item[C.]  Nonlinear positive $\phi$ and ${\rm tr}[\phi(X)] = 1 \ {\rm for \  all} \ X \in \RHO$. This class includes {\it state-dependent} CPTP channels.

\item[D.] Nonlinear positive $\phi$ and ${\rm tr}[\phi(X)] \neq 1 \ {\rm for \  some} \ X \in \RHO$. These are the most general channels considered here. They support rich dynamics similar to that of classical nonlinear systems.

\end{enumerate}
These classes are discussed below in Secs.~\ref{linear ptp and cptp section} - \ref{general normalized ptp section}. 

\subsection{Linear PTP and CPTP channels}
\label{linear ptp and cptp section}
 
A standard form for linear PTP channels is provided in  {\bf Proposition \ref{operator-sum representation prop}}.  While every linear positive map can be put in the form
(\ref{operator-sum representation}), not every map of the form (\ref{operator-sum representation}) is positive (because the $\lambda_\alpha$ can be negative). The necessary conditions for (\ref{operator-sum representation}) to represent a positive (P) or completely positive (CP) map can be obtained as follows: Let $S_{>} \! =  \! \{ \alpha \! : \! \lambda_\alpha \!  > \! 0 \} \neq \emptyset $ and $S_{<} \! = \! \{ \alpha \! : \! \lambda_\alpha \! <  \! 0 \}$ be the index sets of positive and negative Choi eigenvalues, and decompose $\phi$ into $\phi  = \phi_{>} -  \phi_{<} $, where $\phi_{>}(X) =  \sum_{\alpha \in S_{>}} \lambda_\alpha   \, A_\alpha X A_\alpha^\dagger$ and $\phi_{<}(X) =  \sum_{\alpha \in S_{<}} |\lambda_\alpha|  \,  A_\alpha X A_\alpha^\dagger$ are each manifestly positive. Upon rescaling, each can be put into the form $\Phi(X) = \sum_\alpha A_\alpha X A^\dagger_\alpha , \ 
A_\alpha \in \CMATRIX{\DIM}$. Maps of this form also satisfy the stronger condition of complete positivity, meaning that they remain positive when combined with a second Hilbert space ${\cal H}_B$, of any finite dimension, on which the identity acts:
$[\Phi \otimes  {\rm id}_B](X\succeq 0)  \succeq 0$.
Here $X \in B({\cal H}_A  \otimes {\cal H}_B , {\mathbb C})$. Each term in $\Phi$ clearly has the required property:
$\BRA{\psi} (A_\alpha \otimes I ) X (A^\dagger_\alpha \otimes I) \KET{\psi}  = \BRA{\psi_\alpha} X \KET{\psi_\alpha} \ge 0$ 
for every $\KET{\psi} \in \HAB $,
where $ \KET{\psi_\alpha}  = (A^\dagger_\alpha \otimes I)  \KET{\psi}$. The condition for positivity is therefore $ \phi_{>}(X) \succeq  \phi_{<}(X) \  {\rm for \ every} \  X \succeq 0 $, whereas the condition for complete positivity is $\phi_{<}(X) = 0$. An important CP map is the completely-positive trace-preserving (CPTP) channel, which has a nonnegative Choi spectrum $\lambda_\alpha \ge 0$ and therefore an operator-sum representation \cite{KrausAnnPhys71,ChoiLin.Alg.App.75}
\begin{eqnarray}
X \mapsto \Phi(X) = \sum_{\alpha =1}^{\RANK} A_\alpha  X  A_\alpha ^\dagger, \ \ 
\sum_{\alpha  =1}^{\RANK} A_\alpha^\dagger A_\alpha  = I_\DIM , 
\ \  A_\alpha  \in \CMATRIX{\DIM}, \ \ 
\RANK \! \le \! \DIM^2.
 \label{operator-sum representation cptp}
\end{eqnarray}
This follows from {\bf Proposition \ref{operator-sum representation prop}} (but with different $A^{\rm ' s}$ 
as to enforce trace conservation and absorb a factor of $\sqrt{\lambda_\alpha}$).
The integer $\RANK$ is the {\bf Choi rank}. In this paper $\Phi$ always refers to a linear CP (and usually TP) map.

There is a large body of work on linear P channels that are not CP, called {\bf non-CP} maps \cite{PechukasPRL94,ShajiPLA05,CarteretPRA08,13120908,150305342}. The distinction between CP and non-CP maps is the possibility of non-CP maps generating negative states on entangled inputs, requiring their restriction to a subset of the state space where this unphysical output is avoided. Although non-CP operations are well known for their use in entanglement detection \cite{PeresPRL96}, the question of whether linear non-CP channels could provide a {\it computational} advantage over linear CP channels appears to be largely unexplored \cite{150305342}. However, non-CP communication channels in which the environment is measured (and the results used to correct the channel) enable increased capacity \cite{0403092,0409026,0507045}.  In the presence of nonlinearity, we also find that the distinction between CP and non-CP is significant, due to the larger space of nonunitary processes supported by non-CP channels.

\subsection{NINO channels}
\label{nino section}
 
\begin{definition}[{\bf NINO channel}]
Let $\phi :  \RHO \rightarrow  \RHO  $ be a positive linear map with ${\rm tr}[\phi(X)] \neq 0 $ for every $X \in \RHO $, and ${\rm tr}[\phi(X)] \neq 1 $ for one or more $X \in \RHO$. Then the PTP map 
\begin{eqnarray}
\Lambda_\phi :  \RHO \rightarrow \RHO \ \ {\rm given \ by} \ \ 
X \mapsto \Lambda_\phi (X) = \frac{\phi(X)}{{\rm tr}[\phi(X)]} 
\label{nino def equation}
\end{eqnarray}
is called a {\bf nonlinear-in-normalization only} (NINO) channel. 
\label{nino def}
\end{definition}
\noindent 
We stress that the positive map $\phi$ in  Definition \ref{nino def} is {\it linear}, but not unitary $\phi_U(X) = UXU^\dagger, \ U^\dagger U \! = \!  I_\DIM$ (because ${\rm tr}[\phi_U(X)] = {\rm tr}(X) = 1$  for all $X$). NINO channels inherit, from linear maps, the powerful ability to characterize them through their action on a complete basis (because $\phi$ has this property). Next, extending Rembieli\'nski and Caban \cite{200309170},  we obtain a general representation and Markovian  evolution equation for NINO channels.

\begin{prop}[{\bf NINO representation}]
Let $\Lambda$ be a NINO map of the form (\ref{nino def equation}) with $\phi$ linear. Then $\Lambda$ has a representation
\begin{equation}
X \mapsto \Lambda(X) = \frac{\displaystyle \sum_{\alpha =1}^{\RANK} \zeta_\alpha A_\alpha  X  A_\alpha ^\dagger}{ {\rm tr} (FX)}, \ \  F := \sum_{\alpha =1} ^{\RANK} \zeta_\alpha A_\alpha ^\dagger A_\alpha  \neq I_\DIM , \ \ 
 A_\alpha  \in \CMATRIX{\DIM}, 
 \ \ \zeta_\alpha = \pm 1, \ \ 
\RANK \! \le \! \DIM^2.
 \ \ \ \ 
 \label{nino representation prop equation}
\end{equation}
 \label{nino representation prop}
 \end{prop}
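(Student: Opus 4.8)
The plan is to reduce everything to the operator-sum representation of the underlying linear positive map $\phi$ supplied by \textbf{Proposition \ref{operator-sum representation prop}}, and then to massage the numerator and denominator of (\ref{nino def equation}) separately into the claimed form. Because $\phi$ is linear and positive, \textbf{Proposition \ref{operator-sum representation prop}} gives $\phi(X) = \sum_{\alpha=1}^{\RANK} \lambda_\alpha A_\alpha X A_\alpha^\dagger$ with $\lambda_\alpha \in {\mathbb R}$, $A_\alpha \in \CMATRIX{\DIM}$, and $\RANK \le \DIM^2$. The numerator of $\Lambda$ is just $\phi(X)$, so it already has the right shape; the remaining tasks are to express the denominator ${\rm tr}[\phi(X)]$ as a trace against a single fixed operator $F$, to convert the real coefficients $\lambda_\alpha$ into signs $\zeta_\alpha = \pm 1$, and to verify $F \neq I_\DIM$.

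First I would compute the denominator using cyclicity of the trace:
\begin{equation}
{\rm tr}[\phi(X)] = \sum_\alpha \lambda_\alpha \, {\rm tr}(A_\alpha X A_\alpha^\dagger) = \sum_\alpha \lambda_\alpha \, {\rm tr}(A_\alpha^\dagger A_\alpha X) = {\rm tr}(FX), \qquad F := \sum_\alpha \lambda_\alpha A_\alpha^\dagger A_\alpha ,
\end{equation}
which is exactly the denominator appearing in (\ref{nino representation prop equation}). To turn the $\lambda_\alpha$ into signs, I would discard any index with $\lambda_\alpha = 0$ (such terms contribute nothing to either sum) and, for each surviving term, set $\zeta_\alpha := \lambda_\alpha / |\lambda_\alpha| = \pm 1$ while rescaling $A_\alpha \mapsto |\lambda_\alpha|^{1/2} A_\alpha$. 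Since $|\lambda_\alpha|^{1/2}$ is real, this substitution leaves both $\phi(X) = \sum_\alpha \zeta_\alpha A_\alpha X A_\alpha^\dagger$ and $F = \sum_\alpha \zeta_\alpha A_\alpha^\dagger A_\alpha$ in the stated form, and discarding terms only lowers the count, preserving $\RANK \le \DIM^2$. The orthonormality condition ${\rm tr}(A_\alpha^\dagger A_\beta) = \delta_{\alpha\beta}$ of \textbf{Proposition \ref{operator-sum representation prop}} is not needed here and is simply dropped, consistent with its absence from (\ref{nino representation prop equation}).

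The one step with genuine content is the inequality $F \neq I_\DIM$, which I expect to be the crux. I would first note that $F$ is Hermitian, since each $A_\alpha^\dagger A_\alpha$ is Hermitian and each $\zeta_\alpha$ is real. Then I would argue by contraposition: if $F = I_\DIM$, then ${\rm tr}[\phi(X)] = {\rm tr}(FX) = {\rm tr}(X) = 1$ for every $X \in \RHO$, contradicting the defining property in \textbf{Definition \ref{nino def}} that ${\rm tr}[\phi(X)] \neq 1$ for at least one state. Hence $F \neq I_\DIM$. The converse, that ${\rm tr}(FX) = 1$ for all $X \in \RHO$ forces $F = I_\DIM$, follows by testing on pure states $X = \KET{\psi}\BRA{\psi}$: the condition becomes $\BRA{\psi}(F - I_\DIM)\KET{\psi} = 0$ for every unit vector, and a Hermitian operator whose quadratic form vanishes on all states is zero. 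This identifies the NINO condition as precisely $F \neq I_\DIM$, although only the displayed direction is needed for the representation. Assembling the three ingredients then yields (\ref{nino representation prop equation}).
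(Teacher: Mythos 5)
Your proposal is correct and follows essentially the same route as the paper: invoke \textbf{Proposition \ref{operator-sum representation prop}}, write $\lambda_\alpha = \zeta_\alpha|\lambda_\alpha|$, absorb $|\lambda_\alpha|^{1/2}$ into the $A_\alpha$, and identify the denominator as ${\rm tr}(FX)$ by cyclicity. The paper compresses all of this into one line; your extra verification that the NINO condition is exactly equivalent to $F \neq I_\DIM$ is a correct elaboration of what the paper leaves implicit.
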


\begin{proof}
This follows from {\bf Proposition \ref{operator-sum representation prop}} after substituting 
$\lambda_\alpha = \zeta_\alpha | \lambda_\alpha |, \ \zeta_\alpha = {\rm sign}( \lambda_\alpha)$, and rescaling the $A^{\rm 's}$. $ \ \Box$
\end{proof}

Our main objective is to study NINO channels from a dynamical perspective, via evolution equations. Much of the following analysis will carry over to the other classes as well. Let $X \in \RHO$ be the state of a physical system which evolves continuously in time according to
\begin{eqnarray}
X(t) \mapsto X(t + \Delta t)  = \Lambda_{\Delta t, t} \big( X(t) \big) , \ \ 
\Delta t  \ge  0, \ \ 
\Lambda_{0, t}  = {\rm id} \ {\rm for \ all} \ t \in {\mathbb R}.
\label{def continuous dynamics}
\end{eqnarray}
Here $\Lambda_{\Delta t, t}$ is a two-parameter family of PTP channels continuous in both $t$ and $\Delta t$. Because we want to study the simplest instances that illustrate computational advantages, we make several additional simplifying assumptions:
\begin{enumerate}

\item[(i)]  Stationarity: $\Lambda_{\Delta t, t} = \Lambda_{\Delta t}$ for all $t \in  {\mathbb R}$.

\item[(ii)] Semigroup: 
$ \Lambda_{s} \circ \Lambda_{t} =  \Lambda_{s+t},  \ \ 0 \le s \ll 1, \ \ 0 \le t \ll 1.$

\item[(iii)]  The nonlinearity can be turned off, recovering linear CPTP evolution.

\item[(iv)] $\DIM=2$.

\end{enumerate}
From (\ref{def continuous dynamics}) we have that $\Lambda_{\Delta t}$, defined in (i), is continuous and satisfies $\Lambda_{0} = {\rm id}$. Stationarity excludes time-dependent Hamiltonians that arise when a physical system is driven with time-dependent fields (precisely what we want to do to run a device). While it is possible to formulate the  problem with time-dependent generators and obtain some of the results in terms of time-ordered exponentials, we will not cover that case here. Instead we assume that the strength of the nonlinearity, $g$, can be turned on and off, and while in the off state the full toolkit of linear quantum information processing can be applied. The  additional assumptions (i) and (ii) are sufficient to define, for any PTP channel, a Markovian evolution equation that extends linear Markovian CPTP evolution by the Gorini-Kossakowski-Sudarshan-Lindblad  (GKSL) equation \cite{GoriniJMP76,LindbladCMP76}. Our approach  follows a large body of work on nonlinear evolution equations \cite{\EFFECTIVE,\NORMALIZEDPTP,\OPEN}. Especially relevant are the recent papers by Kowalski and Rembieli\'nski \cite{KowalskAP19}, Fernengel and Drossel \cite{190709349}, and Rembieli\'nski and Caban \cite{200309170}. In contrast with \cite{200309170}, we do not assume a semigroup property at long times, which can be violated in the presence of initial system-environment correlation \cite{13120908,0611057}. The restriction to qubits is not used in the derivation of the evolution equations, but is needed for their subsequent analysis.

Before deriving an evolution equation for (\ref{nino representation prop equation}), we briefly consider the $\RANK=1$ case to illustrate one of the differences between NINO and linear CPTP channels. Suppose $A = e^{tL}$ is a continuous time-dependent linear operator  infinitesimally generated by some $L \in B({\cal H})$. Decompose the generator into Hermitian and  anti-Hermitian contributions $L = L_{+} + L_{-} $, with $L_{\pm} := (L \pm L^\dagger)/2$. Then  $X(0) \mapsto X(t) = \Lambda_t(X(0)) 
= (e^{tL} \, X(0)  \, e^{tL^\dagger}) / [{\rm tr} [e^{tL^\dagger} e^{tL}  X(0)] ] $
and
\begin{eqnarray}
\frac{dX}{dt} = [L_- , X] + \{ L_+ , X \} - 2 g \,  {\rm tr} (L_+ X) \,  X, 
 \label{nino example evolution}
\end{eqnarray}
with $g=1$. This evolution equation has been discussed previously by several authors in different contexts \cite{BrodyPRL12,14057165,KowalskAP19,200309170}. Here $\{ \cdot , \cdot \}$ is an anticommutator.
Tracing gives $ \frac{d}{dt} {\rm tr}(X) 
 \! = \!  2 \, {\rm tr} (L_{+} X)  \!  -   \!  2 g \, {\rm tr} (L_{+} X)  \, {\rm tr}(X)  \!  =  \!  0$ assuming ${\rm tr}(X) = 1$, showing how the nonlinear term fixes the normalization. The equation of motion (\ref{nino example evolution}) includes unitary evolution generated by a Hamiltonian $H= i L_{-}$, together with linear dissipation and amplification (if $L_{+}$ has positive eigenvalues). Or we can say that the dynamics is generated by a {\it non-Hermitian} Hamiltonian\footnote{We distinguish between infinitesimal generators $G$ and Hamiltonians $iG$. Thus, unitary evolution results from Hermitian Hamiltonians but from anti-Hermitian generators.}
$ H_{\rm non} := iL = H + iL_{+} $ \cite{14057165}. NINO channels expand the utilization of linear maps by conserving the trace nonlinearly.

Now we obtain an evolution equation for general $m$. Continuous one-parameter NINO channels have the form
\begin{eqnarray}
\Lambda_t(X) = \frac{\displaystyle \sum_{\alpha=1}^\RANK   \zeta_\alpha \,  A_\alpha(t) \, X \, A_\alpha^\dagger(t)}{  {\rm tr}(F_t X)},
\ \  F_t := \sum_{\alpha=1}^\RANK  \zeta_\alpha \,
A_\alpha^\dagger(t) \, A_\alpha(t) \neq I_\DIM,
\ \ \zeta_\alpha = \pm 1 , \ \ t \ge 0.
\label{def continuous NINO}
\end{eqnarray}
The operators $A_\alpha(t) \in \CMATRIX{\DIM}$ are analytic functions of time for $t \ge 0$, which can be classified into two types, jump and nonjump, according to their $t \rightarrow 0$ behavior. We assume that $k>0$ of the $\RANK$ operators have {\it nonzero} limits
\begin{eqnarray}
\lim_{t \rightarrow 0}  A_\alpha(t) = A_\alpha^0 \neq 0, \ \ \alpha \in \{1, \cdots, k \} 
\ \ {\rm (nonjump)},
\label{def nonjump}
\end{eqnarray}
while the others vanish
\begin{eqnarray}
\lim_{t \rightarrow 0}  A_\alpha(t) = 0  , \ \ \alpha \in \{k \! + \! 1, \cdots, \RANK \} 
\ \ {\rm (jump)} .
\label{def jump}
\end{eqnarray}
The jump/nonjump terminology comes from the unraveled stochastic picture, where one or more rare but disruptive ``jump" operations are applied randomly to a simulated open system, on top of a smooth background of unitary plus nonunitary evolution, with the nonunitary component fixed to conserve trace.  There is no restriction on the number of jump operators; however no more than $\DIM^2$ are required. The only restriction on the number of nonjump operators is that $k > 0$:  At least one is required to obtain the desired $t \rightarrow 0$ limit. Note that if $F_t = I_\DIM$, the nonlinear generator disappears and we recover linear GKSL evolution \cite{GoriniJMP76,LindbladCMP76}.

The constant matrices $A_\alpha^0 \in \CMATRIX{\DIM} $  in (\ref{def nonjump}) are not arbitrary; the condition
$ \lim_{t \rightarrow 0}  \Lambda_t(X) = X $
for all $X \in \RHO$ requires 
$A_\alpha^0 = z_\alpha \, I_\DIM, \ \ \alpha \in \{1, \cdots, k\},$ where the $z_\alpha \in {\mathbb C}$ satisfy a ``normalization'' condition 
$ \sum_{\alpha=1}^k \zeta_\alpha |z_\alpha|^2 = 1$ (recall that $\zeta_\alpha = \pm 1$). To satisfy the semigroup property it is sufficient to let
\begin{eqnarray}
A_{\alpha}(t) = 
\begin{cases}
z_\alpha \, e^{tL_{\alpha}}  \ \ {\rm for} \  \alpha \in \{1, \cdots, k\},  \\
B_{\alpha} \sqrt{t}  \ \ \ {\rm for} \  \alpha \in \{k \! + \! 1, \cdots, \RANK \},
\end{cases}
\label{def markovian model}
\end{eqnarray}
where $L_{\alpha}, B_{\alpha} \in \BLO$. To see why, note that in the short-time limit
\begin{eqnarray}
 \sum_{\alpha=1}^\RANK \zeta_\alpha \, A_\alpha(t) X  A_\alpha^\dagger(t)
 = X + t  \bigg[  \sum_{\alpha=1}^{k} \zeta_\alpha |z_\alpha |^2
 \bigg( \! L_{\alpha} X + X L_{\alpha}^\dagger \! \bigg) + \sum_{\alpha > k}^{\RANK} \zeta_\alpha B_\alpha X B_\alpha^\dagger
 \bigg] + O(t^2)
\end{eqnarray}
and
\begin{eqnarray}
 F_t &=& \sum_{\alpha=1}^\RANK  \zeta_\alpha \, A_\alpha^\dagger(t) A_\alpha(t),  \\
 &=&  I_\DIM + t  \bigg[  \sum_{\alpha=1}^{k} \zeta_\alpha |z_\alpha |^2
 \big( \! L_{\alpha} + L_{\alpha}^\dagger \! \big) + \sum_{\alpha > k}^{\RANK} \zeta_\alpha B_\alpha^\dagger B_\alpha 
 \bigg] + O(t^2), \\
 &=&  I_\DIM + t \, \frac{dF_0}{dt} + O(t^2).
\end{eqnarray}
Then, if ${\rm tr}(X) = 1$,
\begin{eqnarray}
 \Lambda_t(X) &=& X + t  \bigg[  \sum_{\alpha=1}^{k} \zeta_\alpha |z_\alpha |^2
 \bigg( \! L_{\alpha} X + X L_{\alpha}^\dagger \! \bigg) + \sum_{\alpha > k}^{\RANK} \zeta_\alpha B_\alpha X B_\alpha^\dagger  
 - {\rm tr} \bigg( \! X \frac{dF_0}{dt} \bigg) X \bigg]
 + O(t^2) \ \ \ \ \ 
 \label{short time NINO evolution}
 \end{eqnarray}
and
\begin{eqnarray}
  \Lambda_s(\Lambda_t(X)) = \Lambda_{s+t}(X)
 + O(s^2) + O(st) + O(t^2)
 \ \ {\rm for \ every \ } X \in \RHO,
\end{eqnarray}
as required for short times. The NINO evolution equation  follows from (\ref{short time NINO evolution}):
\begin{eqnarray}
\frac{dX}{dt} &=&  \sum_{\alpha=1}^{k} \zeta_\alpha  |z_\alpha |^2
 \bigg( \! L_{\alpha} X + X L_{\alpha}^\dagger \! \bigg) + \sum_{\alpha > k}^{\RANK} \zeta_\alpha B_\alpha X B_\alpha^\dagger  
 - {\rm tr} \bigg( \! \! X \frac{dF_0}{dt} \! \bigg) \, X
  \nonumber  \\
 &=&  \sum_{\alpha=1}^{k} \zeta_\alpha |z_\alpha |^2
 \bigg( \! [L_{\alpha-} , X] + \{ L_{\alpha+} , X \} \! \bigg) + \sum_{\alpha > k}^{\RANK} \zeta_\alpha B_\alpha X B_\alpha^\dagger  
 - g \, {\rm tr} \bigg( \! \! X \frac{dF_0}{dt} \!  \bigg) \, X, 
\label{def NINO evolution}
\end{eqnarray}
where, in the second line, each linear operator $L_\alpha$ has been decomposed into Hermitian and anti-Hermitian parts according to $L_\alpha = L_{\alpha+} + L_{\alpha-} $, with $L_{\alpha\pm} := (L_\alpha \pm L_\alpha^\dagger)/2$. For future reference we also introduced a nonlinear coupling strength $g=1$. The anti-Hermitian $ \{ L_{\alpha-} \}_{\alpha=1}^k $ each generate a unitary time evolution with Hamiltonian $H_\alpha = i L_{\alpha-} = H_\alpha^\dagger$, whereas the $ \{ L_{\alpha+} \}_{\alpha=1}^k $  and $ \{ B_{\alpha} \}_{\alpha>k}^\RANK $  generate nonunitary time evolution.
The brackets in (\ref{def NINO evolution}) are commutators and anticommutators 
$\{A,B\} = AB+BA$. Note that the parameters $z_\alpha$ can be absorbed into rescaled generators
$ | z_\alpha|^2 \, L_\alpha \rightarrow L_\alpha $ with no essential change.  After doing this we obtain, for the common case of a single nonjump operator ($k=1$),
\begin{eqnarray}
\frac{dX}{dt} = [L_{-}, X] + \{ L_{+}, X \}  +  \sum_\alpha \zeta_\alpha  B_\alpha X B_\alpha^\dagger +   g \, {\rm tr}(X\Omega) X, \label{def NINO evolution k=1}
 \end{eqnarray}
 where
 \begin{eqnarray}
\Omega := - 2 L_{+}   -    \sum_\alpha \zeta_\alpha \, B_\alpha^\dagger B_\alpha .
\label{omega definition}
 \end{eqnarray}
To the best of our knowledge, the NINO evolution equation (\ref{nino example evolution}) without jump operators was first obtained by Brody and Graefe \cite{BrodyPRL12}, and (\ref{def NINO evolution k=1}) was first obtained by Zloshchastiev and Sergi \cite{14057165}. Trace dynamics, according to  (\ref{def NINO evolution k=1}), satisfies 
\begin{eqnarray}
 \frac{d \tau}{dt} = ( g \tau \! - \! 1) \, {\rm tr}(X \Omega), \ \ \tau := {\rm tr}(X).
\label{nino trace equation}
 \end{eqnarray}
In the $g=0$ linear case, trace is conserved by requiring $\Omega=0$, whereas in the nonlinear case trace conservation requres $g = 1/\tau = 1$.
 
A principal difference between Markovian NINO and Markovian CPTP evolution is in the form of the  nonjump operators. In a linear CPTP channel, $\zeta_\alpha = 1 $ and trace is conserved through the requirement
$ L_{+} = - \frac{1}{2} 
\sum_\alpha B_\alpha^\dagger  B_\alpha 
$, which sets $\Omega = 0$ in (\ref{def NINO evolution k=1}) and (\ref{nino trace equation}).
In this case the total Hermitian generator $L_{+}$ is always negative semidefinite, leading to nonexpansive evolution and usually to a single stable fixed point. However in a NINO channel the $ L_{\alpha+} $ are free parameters, and they can have positive eigenvalues. An example of this distinction occurs when $m \! = \! 1$: Rank 1 CPTP channels are unitary and nondissipative, whereas $m \! = \! 1$ NINO channels already support dissipation and amplification [recall (\ref{nino example evolution})]. Therefore we can think of the NINO evolution equations (\ref{def NINO evolution}) and  (\ref{def NINO evolution k=1}) as generalizations of the linear GKSL equation to support linear evolution by one or more non-Hermitian Hamiltonians \cite{14057165}.

A well known NINO channel is projective measurement followed by postselection \cite{Kraus1983}. In this case the positive linear map in {Definition \ref{nino def}} is $\phi(X) =  A X A^\dagger$ where $A \in \CMATRIX{\DIM}$ is the selected measurement operator. However postselected measurement does not implement a 
deterministic channel and using it comes with an overhead determined by the frequency of rejected measurement outcomes. Furthermore, postselected measurement does not implement the continuous-time evolution equation (\ref{def NINO evolution k=1}).  For these reasons we do not consider postselected measurement channels in this paper. 

\subsection{State-dependent CPTP channels}
\label{state-dependent cptp section}

Next we discuss the class of normalized PTP channels (\ref{normalized map def}) with nonlinear positive $\phi$ and ${\rm tr}[\phi(X)] = 1 \ {\rm for \  all} \ X \in \RHO$. These include the important subset of parametrically nonlinear CPTP channels, which can be called state-dependent CPTP channels.

\begin{definition}[{\bf State-dependent CPTP}]
Let $X \in \BLO$ and $ A_\alpha(X) \in \CMATRIX{\DIM} $ be a set of $X$-dependent matrices satisfying
\begin{enumerate}
\item $  A_\alpha(X^\dagger) = A_\alpha(X),$
\item ${\displaystyle \sum_{\alpha =1}^{\RANK} A_\alpha(X)^\dagger A_\alpha(X) = I_\DIM} $,
\end{enumerate}
for all $X \in \BLO$ and any finite $\RANK$. Then 
\begin{eqnarray}
X \mapsto \Lambda(X) = \sum_{\alpha =1}^{\RANK} A_\alpha(X) \, X \, A_\alpha(X)^\dagger
\end{eqnarray}
is a {\bf state-dependent CPTP} channel.
\end{definition}

\noindent Channels in this class have been investigated by many authors \cite{\EXPANSIVE,13033537,13030371,13107301,150706334,\NQMTHEORY,\EFFECTIVE,GisinJPA81}.
The associated evolution equation is the state-dependent GKSL equation. Many early proposals for nonlinear extensions of quantum mechanics, including the Weinberg model \cite{WeinbergAP89}, and unitary models based on a nonlinear Schr\"odinger equation, are in this class. A rank 1 example is 
\begin{eqnarray}
X \mapsto \Lambda(X) = U(X) \, X \, U(X)^\dagger, \ \ 
U(X) := e^{i \, {\rm tr}(AX) \, B} = U(X^\dagger) , \ \ 
A, B \in \HER.
\label{rank 1 state dependent cptp equation}
\end{eqnarray}
This map applies a generator $B$ scaled by the mean $\langle A \rangle = {\rm tr}(AX)$ of observable $A$. A generalization of
(\ref{rank 1 state dependent cptp equation}) to multiple nonlinear generators is
$U(X) = e^{i \sum_\alpha {\rm tr}(A_\alpha X) \, B_\alpha}$, which includes arbitrary state-dependent Hamiltonians
and unitary mean field theories, including the Gross-Pitaevskii equation for interacting bosons.

For a qubit in the Pauli basis, $X = (I_2 + {\bf r} \cdot \bm{\sigma})/2 \in \RHO$, ${\bf r} = {\rm tr}( X {\bm \sigma}) \in \CBALL $, any Markovian PTP evolution equation can be put in the form\footnote{Let $dX/dt \! = \! Y(X) \! = \! \xi^a(X) \sigma^a =  \xi^a({\bf r}) \sigma^a$ where each $\xi^a :  {\mathbb R}^3 \rightarrow {\mathbb R} $ is a continuous function of ${\bf r}$, which can be decomposed as $\xi^a({\bf r}) = 2 C^\alpha + 2 G^{ab}({\bf r}) r^b = 2 C^\alpha + 2 L^{a b} r^b + 2 g N^{a b}({\bf r}) r^b$, where $2 C^\alpha$ captures any ${\bf r}$-independent part of $\xi^a({\bf r})$.}
\begin{eqnarray}
\frac{dX}{dt} = \frac{\sigma^a }{2}  
\bigg( \frac{dr^a}{dt} \bigg), \ \ 
\frac{dr^a}{dt}  = {\rm tr} \bigg( \! \frac{dX}{dt} \sigma^a \!\bigg)  \!  = G^{a b}({\bf r}) \, r^b + C^a,
\ \ G({\bf r}) \in {\mathbb R}^{3 \times 3}, \ \ 
C^a \in {\mathbb R}^3,
\label{general qubit evolution equation}
\end{eqnarray}
where we sum over repeated indices $a, b \in (1,2,3)$. Here  $G({\bf r})$ is a state-dependent generator, which can be decomposed into linear and nonlinear parts: $G^{a b}({\bf r}) = L^{a b} + g N^{a b}({\bf r})$, with nonzero coupling $g$ indicating the presence of nonlinearity. If $N^{a b}({\bf r}) \! = \! 0$, (\ref{general qubit evolution equation}) describes a general affine transformation on $X$ and ${\bf r}$ (strictly linear if $C^a \! = \! 0$). Every $G({\bf r})$ can be  decomposed into symmetric and antisymmetric components $G = G_{+} +G_{-}$, with $G_{\pm} := (G \pm G^\top)/2$, which have distinct actions on the Bloch vector length: $\frac{d}{dt}  | {\bf r} |^2 = 2 G^{a b}({\bf r})  \, r^a r^b + 2 r^a C^a =2 G^{a b}_{+}({\bf r})  \, r^a r^b + 2 r^a C^a$. Antisymmetric components $G_{-}$ conserve Bloch vector length; they result from (possibly state-dependent) ``unitary'' transformations $X \mapsto U(X) \, X \, U(X)^\dagger$. {\it Linear} antisymmetric generators correspond to rigid rotations of the Bloch ball and result from strictly linear unitary transformations on $X$. General symmetric generators $G_{+}({\bf r})$ can amplify some qubit states, increasing their Bloch vector, while decreasing others. A process that increases (decreases) $| {\bf r}|$ is called amplifying (dissipative), and amplification is entropy decreasing. Linear symmetric generators $G_{+}$ resulting from CPTP channels have nonpositive $G_{+}$ [see discussion following (\ref{def NINO evolution k=1})] but can increase $ | {\bf r} |$ if the channel is nonunital ($C^a \neq 0$). Thus, Bloch vector amplification does not immediately imply nonlinearity of the evolution equation.
 
Instead we consider a geometric characterization of the dynamics that is specifically sensitive to the presence of a nonlinear or non-CP map: The divergence of the qubit velocity field is
\begin{eqnarray}
\nabla \cdot  (d {\bf r} /dt) = {\rm tr}[G_{\! +}({\bf r})] + g \, r^b \partial_a  G^{a b}({\bf r}),
\label{qubit flow divergence equation}
\end{eqnarray}
where $g=1$, which has contributions from both linear symmetric and nonlinear generators, and can take either sign. By contrast, the divergence is nonpositive in linear CPTP channels (because $G_{+} \preceq 0$). 
So a positive divergence implies nonlinearity or non-CP evolution or both. Similarly, the vorticity
\begin{eqnarray}
{\bm \omega} = \nabla \times (d {\bf r} /dt), \ \ 
\omega^a = \varepsilon^{abc}  \partial_b 
[ G^{cd}({\bf r}) r^d]
= \varepsilon^{abc} G^{cb}_{\! -} ({\bf r})
+ g \, \varepsilon^{abc} [\partial_b G^{cd}({\bf r})] \, r^d 
\label{qubit flow curl equation}
\end{eqnarray}
also has linear and nonlinear contributions ($\varepsilon$ is the Levi-Civita symbol and $g=1$). The $\varepsilon^{abc} G_{\! -}^{cb}({\bf r})$ term will contribute if $G({\bf r}) \in {\mathbb R}^{3 \times 3}$ has an antisymmetric ($|{\bf r}|$-conserving) part.

The divergence and vorticity faithfully characterize the velocity field, but don't adequately {\it quantify} the computational benefits of nonlinearity. This is because the velocity field describes how single states $X_\alpha$  follow their streamlines, but does not directly convey the {\it relative} motion between potential trajectories. For a more sensitive characterization we want to consider how pairs of states $(X_\alpha, X_\beta)$ transform under the channel. To further motivate this, consider a common setting for quantum algorithms, where a subroutine accepts as input a sequence of quantum states $(X_1 ,X_2, X_3, \cdots)$, then applies the same channel $\Lambda$ to each in order to learn something about those states or compute some function of those states. For example, we might  know that the states can only take values from a given set $\{ Y_1, Y_2, \cdots \}$, and we want to identify which. Previous authors \cite{MielnikJMP80,PhysRevLett.81.3992,BechmannPLA98,0502072,150706334,150305342} have noted the intriguing computational power afforded by the ability to {\it increase} the distinguishability between a pair of potential inputs $X_\alpha$ and  $X_\beta$, i.e., to increase their trace distance $ \| X_\alpha - X_\beta \|_1$, which is prohibited in linear CPTP channels. Let's examine this for a qubit in the Pauli basis: The differential of $\| X \|_p := [{\rm tr} (|X|^p)]^\frac{1}{p}$ for any square matrix $X$ is $d \| X \|_p = \| X \|_p^{1-p} \ {\rm tr} (|X|^{p-1} d |X|).$ Now let $X = X_\alpha - X_\beta = \frac{1}{2}( {\bf r}_{\alpha} - {\bf r}_{\beta}) \cdot {\bm \sigma} $ be the difference between a pair of qubit states with Bloch vectors  ${\bf r}_{\alpha,\beta} \in \CBALL$ and separation
$  \| X_\alpha - X_\beta \|_p = 2^{\frac{1}{p}-1}  \, | {\bf r}_{\alpha} -{\bf r}_{\beta} | $
measured in Schatten norm. Then
\begin{eqnarray}
\frac{d}{dt} \| X_\alpha - X_\beta \|_p = 2^{\frac{1}{p}-1} \,
\frac{{\bf r}_{\alpha} - {\bf r}_{\beta }}{ |{\bf r}_{\alpha} - {\bf r}_{\beta}|}
\cdot
\bigg(\frac{d{\bf r}_{\alpha}}{dt} - \frac{d{\bf r}_{\beta }}{dt} \bigg) \le 2^{\frac{1}{p}-1} \bigg| \frac{d{\bf r}_{\alpha}}{dt} - \frac{d{\bf r}_{\beta }}{dt} \bigg| 
\end{eqnarray}
characterizes the expansivity of the channel: $\frac{d}{dt} \| X_\alpha - X_\beta \|_p < 0$ means that the channel is strictly contractive on the pair, 
$\frac{d}{dt} \| X_\alpha - X_\beta \|_p = 0$ means it's  distance preserving on the pair, and $\frac{d}{dt} \| X_\alpha - X_\beta \|_p > 0$ means it's expansive. Expansivity allows for the distance between two nearby states $(X_\alpha , X_\beta)$ to increase. In the notation of (\ref{general qubit evolution equation}) the rate of change of state separation is
\begin{eqnarray}
\frac{d}{dt} \| X_\alpha - X_\beta \|_p = 2^{\frac{1}{p}-1} \,
\frac{ r^{a}_{\alpha} - r^{a}_{\beta}  }{ |{\bf r}_{\alpha} - {\bf r}_{\beta}| }
\bigg[ G^{a b}({\bf r}_\alpha) \, r_\alpha^b
- G^{a b}({\bf r}_\beta) \, r_\beta^b \bigg] .
\label{general qubit expansivity} 
\end{eqnarray}
Expanding about the midpoint $ {\bf R} = ( {\bf r}_\alpha +  {\bf r}_\beta )/2$ gives, to second order in $|{\bf r}_{\alpha}  -   {\bf r}_{\beta}|$,
\begin{eqnarray}
\frac{d}{dt} \| X_\alpha - X_\beta \|_p 
\approx 2^{\frac{1}{p}-1} \,
 \frac{ [G_{\! +}^{a b}( {\bf R})  \! + \!  K_{\! +}^{a b}({\bf R})]  
 (r^{a}_{\alpha} \! - \! r^{a}_{\beta} ) (r^{b}_{\alpha}  \! - \!     r^{b}_{\beta} )  }
 {|{\bf r}_{\alpha}  -   {\bf r}_{\beta}|}, \ \ 
 K^{a b}({\bf R}) := g R^c \, \partial_b G^{a c}({\bf R}) . 
 \ \ \ \ \ \ 
\label{qubit expansivity gradient expansion} 
\end{eqnarray}
Here the coupling $g=1$ is added to indicate the presence of nonlinearity. We note the two distinct sources of expansivity in (\ref{qubit expansivity gradient expansion}): 
The antisymmetric part of $G({\bf R})$ doesn't contribute to the expansivity, but positive eigenvalues in the symmetric part do. This is an alternative expression of the same results we found above for $d | {\bf r} |^2/dt$, and for the ${\rm tr}[G_{\! +}({\bf r})]$ term in the divergence.  The second term contributes to expansivity if the symmetric part of the matrix $K \in {\mathbb R}^{3 \times 3}$ has positive eigenvalues.

In the remainder of this section we apply this geometric characterization to a state-dependent CPTP channel with torsion \cite{MielnikJMP80,PhysRevLett.81.3992,150706334},
\begin{eqnarray}
\frac{dr^a}{dt} = G^{a b} \! ({\bf r}) \, r^b \! , \ \ 
G^{a b}({\bf r}) = g z J_z^{a b} , \ \ 
J_z = 
\begin{pmatrix}
0 & -1 & 0 \\
1 & 0 & 0 \\
0 & 0 & 0 \\
\end{pmatrix} \!  , \ \ 
g \in {\mathbb R}.
\label{torsion model}
\end{eqnarray}
Here $J_z$ is an SO(3) generator. $G({\bf r}) \in {\mathbb R}^{3 \times 3} $ is antisymmetric and hence $|{\bf r}|$-preserving. $G({\bf r})$ generates $z$ rotations with a rate that increases linearly with Bloch coordinate $z$, changing direction for $z<0$, a type of twist. The divergence (\ref{qubit flow divergence equation}) vanishes everywhere and the flow is incompressible. The vorticity (\ref{qubit flow curl equation}) is ${\bm \omega} = (-x, -y, 2 z ) g $. The $z$ component $\omega^3$ describes rigid body rotation within each plane of constant $z$, with a $z$-dependent frequency, while $\omega^{1}$ and  $\omega^{2}$ reflect the associated shear.
 We can use (\ref{qubit expansivity gradient expansion}) to discover expansive trajectories: In the torsion model (\ref{torsion model}), the matrix $K^{a b}({\bf R})$ defined in (\ref{qubit expansivity gradient expansion}) is
 \begin{eqnarray}
K = \frac{g}{2}
\begin{pmatrix}
0 & 0 & - (y_\alpha + y_\beta) \\
0 & 0 & (x_\alpha + x_\beta) \\
0 & 0 & 0 \\
\end{pmatrix}, \ \ 
K_{+} = \frac{g}{4}
\begin{pmatrix}
0 & 0 & - (y_\alpha + y_\beta) \\
0 & 0 & (x_\alpha + x_\beta) \\
 - (y_\alpha + y_\beta) & (x_\alpha + x_\beta) & 0 \\
\end{pmatrix}.
\end{eqnarray}
$K_{+}$ has eigenvalues 0 and
$\pm (|g|/4) \sqrt{ (x_\alpha + x_\beta)^2 
+  (y_\alpha + y_\beta)^2   } $.
For a pair of nearby states $X_\alpha, X_\beta$, their difference ${\bf r}_{\alpha} - {\bf r}_{\beta}$ is a short vector located at midpoint position ${\bf R} = 
({\bf r}_\alpha + {\bf r}_\beta)/2$. Expansive trajectories occur when
$ K^{ab}({\bf R}) (r^{a}_{\alpha} - r^{a}_{\beta} ) (r^{b}_{\alpha} - r^{b}_{\beta} ) 
 = 
\partial_b G^{a c}({\bf R}) R^c  (r^{a}_{\alpha} - r^{a}_{\beta} ) (r^{b}_{\alpha} - r^{b}_{\beta} ) $
is positive. In the torsion model this condition simplifies to 
\begin{eqnarray}
g  
\big[ R^x (y_{\alpha} - y_{\beta} ) - R^y (x_{\alpha} - x_{\beta} ) \big] (z_{\alpha} - z_{\beta} )  > 0.
\end{eqnarray}
Let ${\bf r}_\alpha = (\frac{1}{2}, \frac{\eta_y}{2} , \frac{\eta_z}{2})$ and ${\bf r}_\beta = (\frac{1}{2}, - \frac{\eta_y}{2} , - \frac{\eta_z}{2})$ be a pair of states with midpoint position ${\bf R} = (\frac{1}{2}, 0, 0)$ along the positive $x$ axis. The states are separated by $\eta_y$ in the $y$ direction and $\eta_z$ in the $z$ direction. For  nonzero $\eta_y$ and $\eta_z$, the two states  
move in opposite directions and separate at a rate
$\frac{d}{dt} \| X_\alpha - X_\beta \|_p = 2^{\frac{1}{p} - 2} g ( \eta_y \eta_z  / \! \sqrt{\eta_y^2 + \eta_z^2})$.

\subsection{General normalized PTP channels}
\label{general normalized ptp section}

Next we discuss channels with nonlinear positive $\phi$ and ${\rm tr}[\phi(X)] \neq 1 \ {\rm for \  some} \ X \in \RHO$, the most general PTP channels considered here. This class combines the nonunitary features of the NINO channels with the nonlinearity of state-dependent CPTP channels. Suppose we want to add linear dissipation/amplification to the torsion model (\ref{torsion model}) by adding a linear part $G$ to the generator. What are the allowed values of $G$? To answer this question, we use generators from the NINO evolution equation (\ref{def NINO evolution k=1}), namely $dX/dt = [ L_{-} , X] + \{ L_{+} , X \} +  \zeta_2 B X B^\dagger$, where we have included one jump operator $B$ and one nonjump operator $L$.\footnote{Here we assume that $\zeta_1 \! = \!1$;  the normalization condition on the $z_\alpha$ is then satisfied with $z_1 = 1$.}   
In the Pauli basis the first term in $dX/dt$ leads to $dr^a/dt = G^{ab} r^b$, with $G^{ab} = {\rm tr} ( \sigma^a L_{-} \sigma^b - \sigma^b L_{-} \sigma^a)/2$, resulting in an antisymmetric contribution to $G$. To see its connection with unitary dynamics, expand $L_{-} = -L_{-}^\dagger $ in the Pauli basis as $L_{-} = i(\xi_0 I + \xi_a \sigma^a)$, where $\xi_0, \dots, \xi_3 \in {\mathbb R}$ are real coordinates for $L_{-}$.
In this basis $G^{ab} = 2 \varepsilon^{abc} \xi_c  = - i \,  {\rm tr}(L_{-} \sigma^c) \,  \varepsilon^{abc}$, a real but otherwise arbitrary linear combination of SO(3) generators. Any linear antisymmetric $G$ can be implemented by controlling these generators. Similarly, the $\{ L_{+} , X \}$ term leads to a real symmetric $G^{ab} = {\rm tr} ( \sigma^a L_{+} \sigma^b + \sigma^b L_{+} \sigma^a)/2$ plus an inhomogeneous part $C^a =  {\rm tr} ( \sigma^a L_{+})$. Expanding $L_{+} = L_{+}^\dagger $ in the Pauli basis as $L_{+} = \xi_0 I + \xi_a \sigma^a$, where $\xi_0, \dots, \xi_3 \in {\mathbb R}$ are again real, leads to a diagonal matrix
$G  = 2 \xi_0 I_3 = {\rm tr}(L_{+}) I_3 $.  And the $ \zeta_2 \, B X B^\dagger$ term leads to $G^{ab} =  \zeta_2 \, {\rm tr} ( \sigma^a B  \sigma^b B^\dagger)/2$ and $C^a =  \zeta_2 \, {\rm tr} ( \sigma^a B B^\dagger)/2$ in the Pauli basis. Expanding $B \in \CMATRIX{2}$ as $B = \xi_0 I + \xi_a \sigma^a$, with $\xi_0, \dots, \xi_3 \in {\mathbb C}$ {\it complex} coordinates for $B$, we have
\begin{eqnarray}
G^{ab} = \zeta_2 \,  \big( |\xi_0|^2 - |\xi_1|^2 -  |\xi_2|^2  -  |\xi_3|^2 \big)  \delta^{ab} 
+ 2 \zeta_2 \,  {\rm Im} (\xi^*_0 \xi_c ) \varepsilon^{abc} 
+ 2 \zeta_2 \,  {\rm Re} (\xi^*_a \xi_b ) .
\end{eqnarray}
The first term is diagonal. The second term is antisymmetric (both  $L_{-}$  and $B$ contribute to  unitary evolution if this term is nonzero). The third term is symmetric. Let $\xi_0 = 0$; then
\begin{eqnarray}
G  =  - \zeta_2 \, (|\xi_1|^2 +  |\xi_2|^2  +  |\xi_3|^2) I
+ 2 \zeta_2 \,  {\rm Re}  \! \left[ \! 
\begin{pmatrix}
\xi^*_1 \\
\xi^*_2 \\
\xi^*_3
\end{pmatrix}
\! \!  \otimes \! \! 
\begin{pmatrix}
\xi_1 & \xi_2 & \xi_3 
\end{pmatrix}
\!  \right],
\end{eqnarray}
where $I$ is the identity. Consider now a {\it pair} of jump operators with the same $\zeta_2$ and coordinates $\xi = (1,1,0)$ and $(0,0,1)$:
\begin{eqnarray}
G _{(1,1,0)} = \zeta_2 
\begin{pmatrix}
0 & 2 & 0 \\ 
2 & 0 & 0 \\ 
0 & 0 & -2 
\end{pmatrix}, \ \ 
G _{(0,0,1)} = \zeta_2 
\begin{pmatrix}
-1 & 0 & 0 \\ 
0 & -1 & 0 \\ 
0 & 0 & 1 
\end{pmatrix}.
\end{eqnarray}
Combining them gives
\begin{eqnarray}
G _{(1,1,0)} + G _{(0,0,1)}  = \zeta_2 (2 \lambda_1 - I),
\ \ \lambda_1 = \begin{pmatrix}
0 & 1 & 0 \\ 
1 & 0 & 0 \\ 
0 & 0 & 0 
\end{pmatrix} \! ,
\end{eqnarray}
where $\lambda_1$ is a Gell-Mann matrix. Similarly, $G _{(1,0,1)} + G _{(0,1,0)}  = \zeta_2 (2 \lambda_4 - I)$ and
$G _{(0,1,1)} + G _{(1,0,0)}  = \zeta_2 (2 \lambda_6 - I)$, where $\lambda_4$ and $\lambda_6$ are Gell-Mann matrices. By combining Hamiltonian control with jump operator engineering, a large set of linear generators $G$ can be implemented.

\clearpage

\section{Fault-tolerant nonlinear state discrimination}
\label{state discrimination section}

In the remainder of the paper we consider an extension of the qubit torsion channel (\ref{torsion model}) that includes linear dissipation and amplification, applied to the problem of state discrimination  \cite{Helstrom1976,Holevo1982,08101970,0010114,BaeJPA15}. Using the techniques of Sec~\ref{general normalized ptp section}, jump operators are chosen such that 
\begin{eqnarray}
\frac{dX}{dt} = \frac{\sigma^a }{2}  
\bigg( \frac{dr^a}{dt} \bigg), \ \ 
\frac{dr^a}{dt}  = {\rm tr} \bigg( \! \frac{dX}{dt} \sigma^a \!\bigg)  \!  = G^{a b}({\bf r}) \, r^b = (m \, \lambda_4 - \gamma I + g z J_z)^{a b} r^b \! ,
\label{dissipative torsion model}
\end{eqnarray}
where $I$ is the $ 3 \! \times \! 3$ identity,  
\begin{equation}
\lambda_4 = 
\begin{pmatrix}
0 & 0 & 1 \\
0 & 0 & 0 \\
1 & 0 & 0 \\
\end{pmatrix} \!  ,
\ \ {\rm and} \ \ 
J_z = 
\begin{pmatrix}
0 & -1 & 0 \\
1 & 0 & 0 \\
0 & 0 & 0 \\
\end{pmatrix} \! .
\end{equation}
Here $\lambda_4$ is an SU(3) generator, $J_z$ is an SO(3) generator, and we sum over repeated indices $a, b \in (1,2,3)$. The dimensionless model parameters $m$, $\gamma$, and $g$ are real variables of either sign. The fixed-point equations are
\begin{eqnarray}
\frac{dx}{dt} &=& m z - \gamma x - g y z = 0, \\
\frac{dy}{dt} &=& - \gamma y + g x z = 0, \\
\frac{dz}{dt} &=& m x - \gamma z = 0.
\end{eqnarray}
The origin is always a fixed point,
$ {\bf r}^{\rm fp}_{0} = (0,0,0), $ although not always stable. Assuming ${\bf r} \neq (0,0,0)$, $\gamma \neq 0$, and eliminating $z$, the fixed point equations are
\begin{eqnarray}
m^2 - \gamma^2  = g m y, 
\label{first fixed point equation} \\
\gamma^2 y = gmx^2. 
\label{second fixed point equation}
\end{eqnarray}
If $g=0$, any fixed points must be confined to the $y=0$ plane. There are no additional fixed points unless $\gamma = \pm m$, in which case there is a set of fixed points ${\bf r}^{{\rm fp},g=0}_{z= \pm x}$ on the line ${\bf r}_{z=\pm x} = \{ (x,0,\pm x) : x \in {\mathbb R} \}$, shown in Fig.~\ref{separatrix figure}.  The settings $\gamma = \pm m$ are singular lines in the parameter space of the model. Manipulating these singularities in the presence of nonlinearity is the key to engineering useful information processing. 

\begin{figure}
\includegraphics[width=17.0cm]{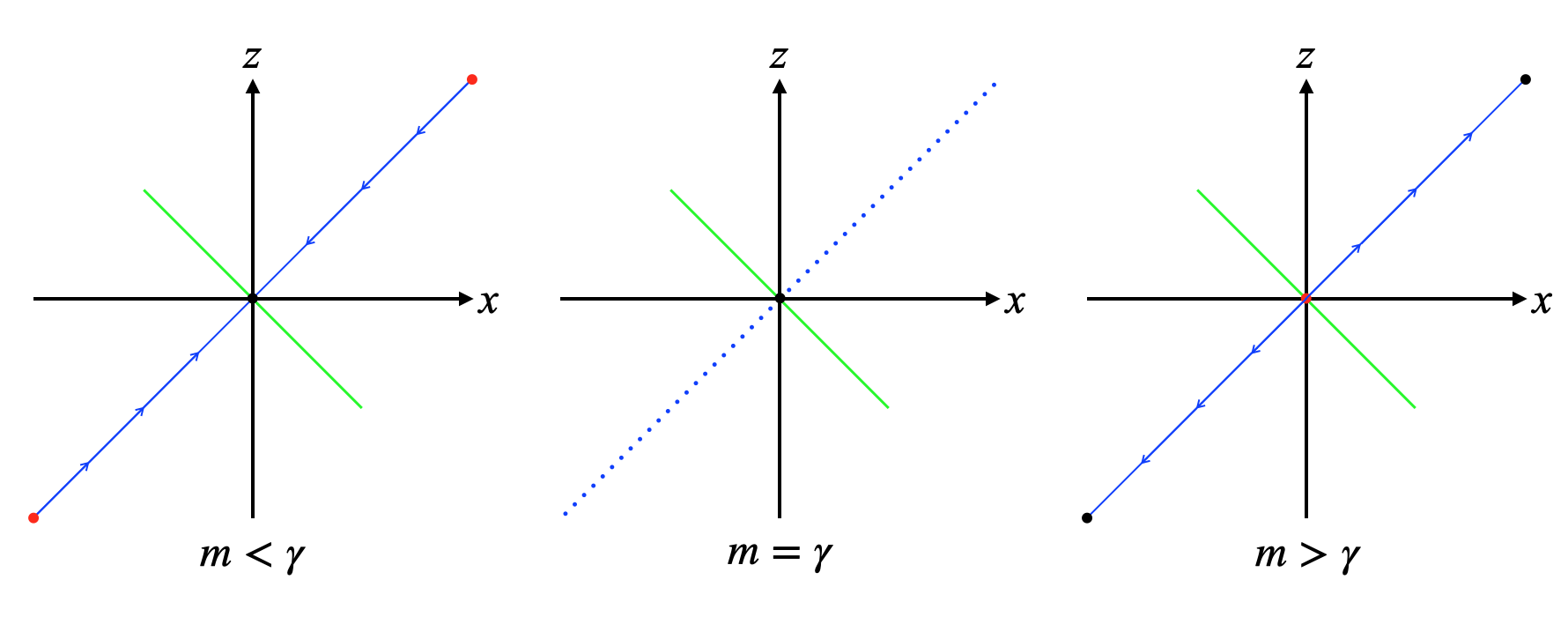} 
\caption{Illustration of the dynamics in the neighborhood of the origin for $m, \gamma \ge 0$. When $m < \gamma$, two unstable fixed points at infinity (red) feed the stable fixed point  ${\bf r}^{\rm fp}_{0}$ at the origin (black). However, if $m > \gamma$, ${\bf r}^{\rm fp}_{0}$  is unstable (red). The $\xi_{-}$ axis (green) is the separatrix. ${\bf r}^{\rm fp}_{0}$ feeds two new stable fixed points ${\bf r}^{\rm fp}_{+,-}$ (black). At the critical point $m = \gamma$, all points on the $\xi_{+}$ axis (blue dots) are fixed points. }
\label{separatrix figure}
\end{figure} 

When $g > 0$,  any fixed points must be confined to the plane $m y = (m^2 - \gamma^2)/g$. However (\ref{second fixed point equation}) requires $y$ to have the same sign as that of $m$. Therefore $m y >0 $, which is only possible when $m^2 > \gamma^2$.
Therefore, when $m^2  <  \gamma^2$, the only fixed point is ${\bf r}^{\rm fp}_{0}$, and this fixed point is stable for all $m^2  <  \gamma^2$. If instead the condition $m^2 > \gamma^2$ is satisfied, and $g > 0$, there is a pair of stable fixed points at
\begin{eqnarray}
{\bf r}^{\rm fp}_{\pm} = \bigg( \! \pm \frac{ |\gamma | }{g}
\sqrt{ \delta } , \ 
\frac{m}{g} \delta , \
 \pm \, {\rm sign}(\gamma) \frac{m}{g} 
\sqrt{ \delta } \bigg), \ \ 
\delta := \frac{m^2 - \gamma^2}{m^2} \in (0,\infty].
\label{stable fixed points}
\end{eqnarray}
For these fixed points to be contained within the Bloch ball requires $|g| > g_{\rm min}$, where $g_{\rm min} = \sqrt{ (\gamma^2 + m^2) \delta  + m^2 \delta^2 }$. The dynamics between fixed points ${\bf r}^{\rm fp}_{-}, {\bf r}_0^{\rm fp}$, and ${\bf r}^{\rm fp}_{+}$ can be understood as follows: When $g=0$ we have
\begin{eqnarray}
\frac{dx}{dt} &=& m z - \gamma x, \\
\frac{dy}{dt} &=& - \gamma y, \\
\frac{dz}{dt} &=& m x - \gamma z.
\end{eqnarray}
Note that the $y$ motion is decoupled from $x$ and $z$, and that it is always stable for $\gamma > 0$. Furthermore, the linearized model has an additional symmetry which becomes explicit after changing variables to
$ \xi_{\pm} = (z \pm x)/2$:
\begin{eqnarray}
\frac{d\xi_{+}}{dt} &=& (m -\gamma) \xi_{+}, 
\label{xi plus equation} \\
\frac{d\xi_{-}}{dt} &=& -  (m + \gamma) \xi_{-}.  
\label{xi minus equation}
\end{eqnarray}
The $\xi_{+}$ and $\xi_{-}$ variables are also decoupled. $\xi_{+}$ is the coordinate along the line $z=x$ mentioned above, and $\xi_{-}$ is the coordinate along the perpendicular line $z=-x$. Motion in the $\xi_{+}$ direction is stable for $m < \gamma$; in this case each point on the line $z=x$ flows to the fixed point ${\bf r}_0^{\rm fp}$ at the origin. However the $\xi_{+}$ motion becomes unstable when $m > \gamma$. In this regime ${\bf r}_0^{\rm fp}$ is unstable, and each point on the line $z=x$ (other than $z \! = \! x \! = \! 0$) flows outward to infinity. We can interpret this unstable case as having two stable fixed points at  $( \infty, 0, \infty)$ and $( -\infty, 0, -\infty)$, at the ends of the line $z=x$. By contrast, close to the singularity at $m = \gamma$, the perpendicular $\xi_{-}$ motion is stable unless $m$ and $\gamma$ are both negative. In this picture, the most important effect of the nonlinearity is to move the two stable fixed points at infinity to the finite positions (\ref{stable fixed points}). This is illustrated in Fig.~\ref{separatrix figure}. In Fig.~\ref{phases figure 09} we plot the trajectories for a cloud of randomly chosen initial states (red dots) within the Bloch ball $\CBALL$, close to the bifurcation but in the $m^2 < \gamma^2$ phase. In Fig.~\ref{phases figure 11} we show the same plot in the $m^2 > \gamma^2$ phase. These simulations further support the picture described above.

\begin{figure}
\includegraphics[width=8.0cm]{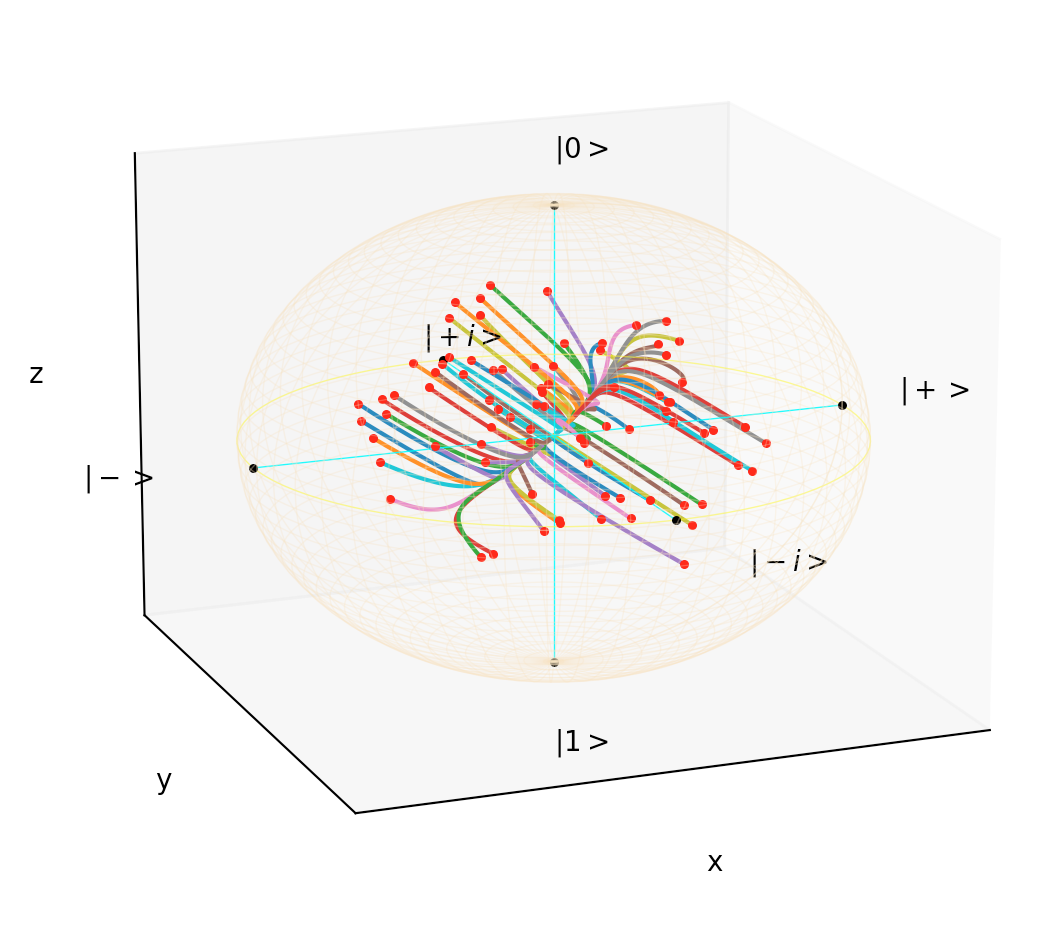} 
\caption{Simulation solutions of the torsion channel (\ref{dissipative torsion model}) with $\gamma = 1$, $m=0.9$, and $g=1$,
showing attraction to ${\bf r}^{\rm fp}_{0}$. The
$x$, $y$, and $z$ axes are Bloch vector coordinates.
Red dots indicate random initial conditions. 
The Bloch sphere is outlined in yellow.}
\label{phases figure 09}
\end{figure} 

\begin{figure}
\includegraphics[width=8.0cm]{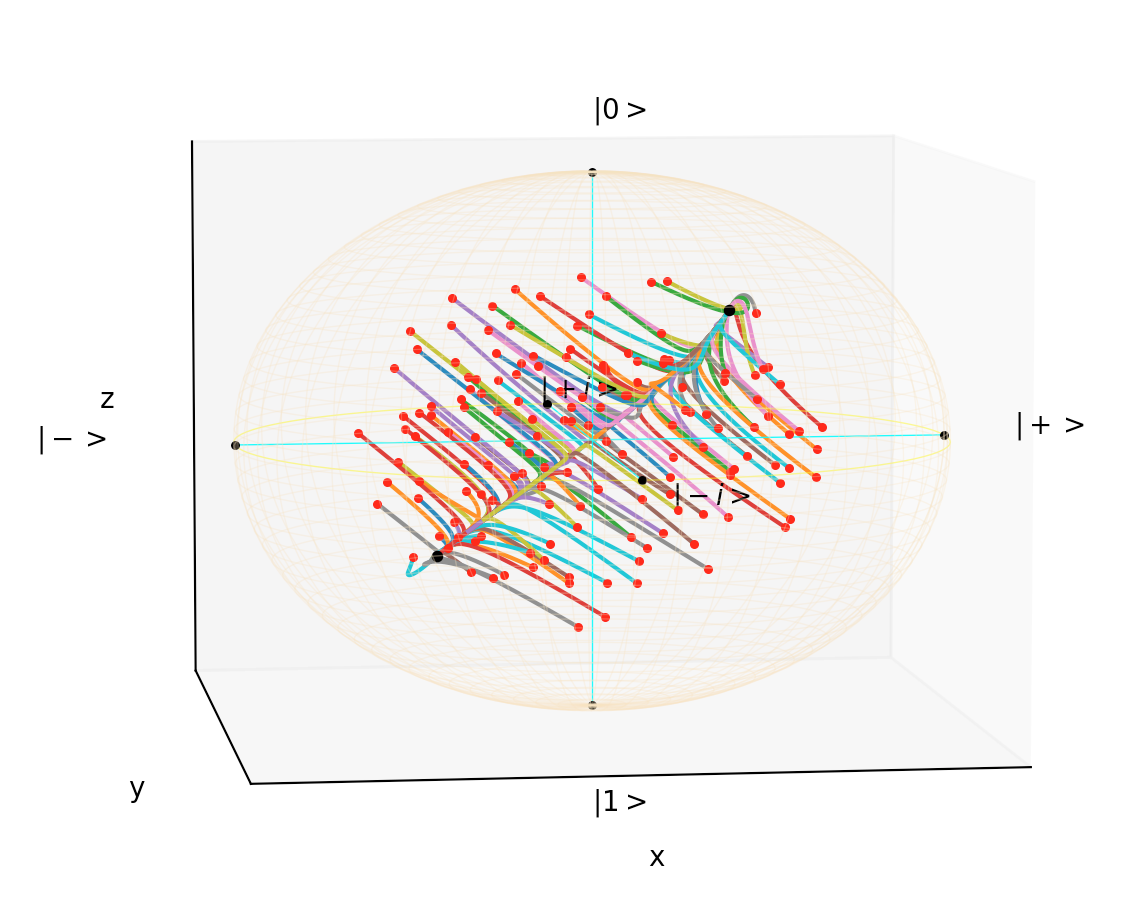} 
\caption{Simulation solutions to (\ref{dissipative torsion model}) with $\gamma = 1$, $m=1.1$, $g=1$, and $\delta = 0.2$, showing attraction to ${\bf r}^{\rm fp}_{\pm}$.}
\label{phases figure 11}
\end{figure} 

The $\xi_{+}$ dynamics near the unstable fixed point ${\bf r}_0^{\rm fp}$ can be used to achieve robust state discrimination with the exponential speedup supported by expansive nonlinear channels \cite{MielnikJMP80,PhysRevLett.81.3992,BechmannPLA98,0502072,150706334}. Points very close to ${\bf r}^{\rm fp}_{+}$ have $|{\bf r}| \ll 1$, so the nonlinearity can be neglected there. Equations (\ref{xi plus equation}) and (\ref{xi minus equation}) then apply to the dynamics near ${\bf r}_0^{\rm fp}$ even when $g\neq 0$. Consider the plane passing through the origin and perpendicular to the $\xi_{+}$ axis. The velocity field smoothly changes sign across this plane; i.e., it is a separatrix between basins of attraction for ${\bf r}^{\rm fp}_{+}$ and  ${\bf r}^{\rm fp}_{-}$.  Suppose that a qubit is prepared in a state from the set $\{ X_\alpha, X_\beta  \}$, with $X_\alpha$ and $X_\beta$ close in trace distance $ \epsilon = \|  X_\alpha -  X_\beta \|_1$ but on opposite sides of the separatrix. Then we implement a gate by turning on the nonlinearity for a time $t = O(1/g)$, during which $ X_\alpha$ and $X_\beta$ flow to different fixed points. After this evolution, the nonlinearity is turned off and the qubit is measured. This nonlinear gate leads to an exponential speedup if the initial separation is exponentially small: $\epsilon = 2^{-k}$ \cite{PhysRevLett.81.3992,0502072,150706334}.

Positivity of the dissipative torsion channel requires that the Bloch vector ${\bf r}$ remain in the Bloch ball ${|\bf r}| \le 1$. However the evolution equation  (\ref{dissipative torsion model}) does not itself enforce this condition. Therefore the positivity condition must be implemented dynamically through control of the qubit Hamiltonian, or added depolarization, and trajectories leaving the Bloch ball are regarded as unphysical. We note that a breakdown of positivity is expected in some open systems with initial system-environment entanglement that result in non-CP channels \cite{150305342}.
 
\section{Conclusion}
\label{conclusion section} 

\begin{quote}
“Our mistake is not that we take our theories too seriously, but that we do not take them seriously enough.” \ Steven Weinberg \cite{Weinberg1993} 
\end{quote}

In this paper, we introduced a classification for nonlinear  channels, and explored the computational power of three classes of associated nonlinear evolution equations (for qubits). Geometric characterizations of the dynamics, including flow divergence and expansivity, are shown to indicate the presence of particular forms of nonlinearity and non-complete positivity. This approach to classifying nonlinear maps appears to be new. The type B NINO channels \cite{BrodyPRL12,14057165,KowalskAP19,200309170}, and especially the type C state-dependent CPTP channels \cite{\EXPANSIVE,13033537,13030371,13107301,150706334,\NQMTHEORY,\EFFECTIVE,GisinJPA81},  have been discussed previously. To the best of our knowledge, type D channels, a main focus of this paper, have not been discussed previously. In this work, we did not try to justify the channels physically or provide microscopic models for them. Instead, in the spirit of Weinberg's famous quote, we impose only the minimal requirements of positivity and trace preservation, and ask what types of nonlinearity might be realized in principle, and what computational advantages they would provide. We see that engineering both nonlinearity and dissipation allows one to implement rich dynamics similar to that of classical nonlinear systems. Our main result is the identification of a novel phase where the Bloch ball separates into two basins of attraction, which can be used to implement fast quantum state discrimination 
\cite{PhysRevLett.81.3992,0502072,150706334} with intrinsic fault-tolerance. In particular, the states do not have to be initialized with high accuracy, but only in the appropriate basin of attraction. 

Although appealing, this gate has limitations: (i) First, finite experimental resolution and control will limit the smallest values of initial state separation $\epsilon$ achievable in practice. If the inputs to the discriminator are the outputs of a preceding process, they will also come with errors. (ii) Second, the fixed points ${\bf r}^{\rm fp}_{+,-}$ are not perfectly distinguishable. Although there is considerable flexibility in choosing their location, in practice they need to be well within the Bloch ball to ensure positivity. If the ${\bf r}^{\rm fp}_{+,-}$ are too close to the surface $|{\bf r}|=1$, trajectories approaching them may lead to unphysical solutions leaving the Bloch ball. (iii) And third, there will likely be errors associated with the effective model itself.

\section*{Acknowledgements}
 
This work was partly supported by the NSF under grant no. DGE-2152159. It is a pleasure to thank Andrew Childs for correspondence. The author also acknowledges anonymous referees for suggestions that have improved the paper.


\bibliography{arxiv.bbl}

\end{document}